\newtheorem{lemma}{Lemma}
\newtheorem{proposition}{Proposition}
\newtheorem{case}{Case}
\newtheorem{therule}{Rule}
\newcommand{\SecPLF}{SecPLF}
\newcommand{\oldstuff}[1] {}
\setlist{  
  listparindent=\parindent,
  parsep=0pt,
}
\definecolor{Gray}{gray}{0.9}
\definecolor{Yellow}{rgb}{255,255,0}
\newcolumntype{y}{>{\columncolor{Yellow}}c}
  \providecommand\BibTeX{{%
    \normalfont B\kern-0.5em{\scshape i\kern-0.25em b}\kern-0.8em\TeX}}}
\begin{document}

\title{\SecPLF: Secure Protocols for Loanable Funds against Oracle Manipulation Attacks}

\author{Sanidhay Arora}
\affiliation{%
  \institution{University of Oregon}
  \city{Eugene}
  \country{USA}}
\email{sanidhay@uoregon.edu}

\author{Yingjiu Li}
\affiliation{%
  \institution{University of Oregon}
  \city{Eugene}
  \country{USA}}
\email{yingjiul@uoregon.edu}

\author{Yebo Feng}
\affiliation{%
  \institution{Nanyang Technological University}
  \country{Singapore}}
\email{yebo.feng@ntu.edu.sg}

\author{Jiahua Xu}
\affiliation{%
  \institution{University College London, The DLT Science Foundation}
  \city{London}
  \country{UK}}
\email{jiahua.xu@ucl.ac.uk}

\begin{abstract}
The evolving landscape of Decentralized Finance (DeFi) has raised critical security concerns, especially pertaining to Protocols for Loanable Funds (PLFs) and their dependency on price oracles, which are susceptible to manipulation.
The emergence of flash loans has further amplified these risks, enabling increasingly complex oracle manipulation attacks that can lead to significant financial losses. 
Responding to this threat, we first dissect the attack mechanism by formalizing the standard operational and adversary models for PLFs. Based on our analysis, we propose \SecPLF, a robust and practical solution designed to counteract oracle manipulation attacks efficiently. 
\SecPLF\ operates by tracking a price state for each crypto-asset, including the recent price and the timestamp of its last update. 
By imposing price constraints on the price oracle usage, \SecPLF\ ensures a PLF only engages a price oracle if the last recorded price falls within a defined threshold, thereby negating the profitability of potential attacks. 
Our evaluation based on historical market data confirms \SecPLF's efficacy in providing high-confidence prevention against arbitrage attacks that arise due to minor price differences.
\SecPLF\ delivers proactive protection against oracle manipulation attacks, offering ease of implementation, oracle-agnostic property, and resource and cost efficiency. 
\end{abstract}

\begin{CCSXML}
<ccs2012>
   <concept>
       <concept_id>10003456.10003462.10003574.10003575</concept_id>
       <concept_desc>Social and professional topics~Financial crime</concept_desc>
       <concept_significance>500</concept_significance>
    </concept>
   <concept>
       <concept_id>10002978</concept_id>
       <concept_desc>Security and privacy</concept_desc>
       <concept_significance>500</concept_significance>
    </concept>
   <concept>
       <concept_id>10002978.10003006.10003013</concept_id>
       <concept_desc>Security and privacy~Distributed systems security</concept_desc>
       <concept_significance>500</concept_significance>
    </concept>
   <concept>
       <concept_id>10002951.10003260</concept_id>
       <concept_desc>Information systems~World Wide Web</concept_desc>
       <concept_significance>100</concept_significance>
    </concept>
 </ccs2012>
\end{CCSXML}

\ccsdesc[500]{Social and professional topics~Financial crime}
\ccsdesc[500]{Security and privacy}
\ccsdesc[500]{Security and privacy~Distributed systems security}
\ccsdesc[100]{Information systems~World Wide Web}

\keywords{blockchain, flash loan, oracle manipulation attack, Decentralized Finance (DeFi), Protocols for Loanable Funds (PLF)}

\maketitle

\section{Introduction}

Decentralized Finance (DeFi) has dramatically reshaped the landscape of the financial sector in recent years, introducing a paradigm shift toward an inclusive and highly programmable system of finance~\cite{wood2014ethereum,werner2022sok,xu2022reap}. 
Rooted in the underlying technology, DeFi enables the execution of smart contracts that automate the delivery of financial services, rendering intermediaries unnecessary. 
While this democratizes access to financial instruments, it also introduces unique security risks and vulnerabilities. 
Attacks on DeFi platforms have become more and more frequent, sophisticated, and damaging~\cite{zhou2023sok}. 
Collectively, reported DeFi attacks have resulted in the loss of over \$3B in funds to date~\cite{rekt}, leading to substantial financial setbacks for investors. 
It is therefore of paramount importance to develop practical and effective strategies to defend against such attacks.

Among the myriad challenges that DeFi protocols face, a critical vulnerability in Protocols for Loanable Funds (PLFs) lies in their dependence on price oracles~\cite{zhou2023sok}. 
These external data sources, essential in furnishing market price data, have regrettably become a target for manipulation, thereby exposing associated PLFs to potential exploitation. 
A tampered oracle can trigger erroneous market price signals, inflicting severe repercussions across the DeFi landscape.
This issue is particularly pronounced with the advent of \emph{flash loans}, a novel smart contract functionality that can facilitate oracle manipulation attacks on PLFs.

\smallskip

Emerging in recent years within the DeFi arena, flash loans~\cite{wang2021towards} have empowered users with the capabilities to borrow significant capital amounts while incurring only gas fees. 
Given the fact that all entities---including flash loan providers, price oracles, and PLFs---operate on publicly accessible smart contracts, a malicious user could craft programs that leverage flash loans to manipulate a price oracle deftly. 
This manipulation sets the stage for considerable exploitation of any PLF relying on the compromised oracle~\cite{flash-loan}. 
Recent instances have seen several significant attacks on PLFs, with crypto-asset losses surpassing the \$100M mark~\cite{rekt}.

\smallskip

Regrettably, the prevention of oracle manipulation attacks proves to be challenging~\cite{zhou2023sok}, mainly as these attacks occur within a single transaction, inherently leaving little room for mitigation. 
Further complicating matters is the intricate interaction between multiple smart contracts across various DeFi platforms, amplifying the complexity of financial transactions and inhibiting the discovery of security vulnerabilities. 
Although some approaches have been proposed (e.g., multiple oracle sources~\cite{synthetix}, price feeds and averaging~\cite{adams2022uniswap}, timelock mechanism~\cite{ezzat2022timelock}, staking and reputation system~\cite{pauwels2022zkkyc}), oracle manipulation attacks have not been fully addressed. 
On the other hand, similar price manipulation problems have been studied and addressed in centralized finance (CeFi).
For instance, traditional finance uses circuit breakers to pause trading amid significant price fluctuations or manipulations to protect investors~\cite{Santoni1993CircuitBA}. However, this approach has not been properly explored in DeFi.

In this paper, we seek to counteract oracle manipulation attacks on PLFs, particularly those facilitated by flash loans. 
We propose \SecPLF, a robust approach specifically designed to provide an effective and cost-efficient solution. 
This approach, which PLFs can easily implement, serves as a powerful tool to combat and avert these potentially catastrophic attacks, thereby enhancing the overall security of the DeFi landscape.

The key idea of \SecPLF\ lies in tracking a price state for each crypto-asset. 
This state comprises of \textit{(i)} the spot price of a crypto-asset, and \textit{(ii)} the timestamp of the block in which this price state was last updated. 
\SecPLF\ imposes specific constraints on this price state and the usage of the price oracle. 
These constraints are designed to ensure that the PLF only engages a price oracle if the last recorded price of the asset is within a defined threshold. 
We calibrate this threshold to be equivalent to the minimum price distortion an adversary needs to attain a profit from the attack transaction.
This strategy effectively deters attackers, as it renders the attack unprofitable, thereby ensuring they are discouraged from initiating it in the first place.
In contrast with existing solutions, \SecPLF\ presents the following contributions:
\begin{enumerate}[leftmargin=*]
\item \textbf{Proactive Safeguarding:} \SecPLF\ provides robust and effective prevention against oracle manipulation attacks. Consequently, attackers are unable to successfully launch an attack initially, causing negligible impacts on the DeFi protocol operations.
\item \textbf{Re-configurable and Steerable:} With two hyper-parameters, $z$ and $\epsilon$, \SecPLF\ enables a PLF to quantify its security in terms of arbitrage risk ($z$) and under-collateralization risk ($\epsilon$). The flexibility of these parameters provides the PLF with configuration control over their protocol based on these risks.
\item \textbf{Ease of Implementation:} \SecPLF\ only employs fundamental smart contract functionalities for implementation, requiring minimal modifications to adapt to a variety of DeFi systems.
\item \textbf{Resource Efficiency:} The computational resources required for the \SecPLF\ algorithm are minor compared to the overall expenses of operating a PLF.
\item \textbf{Oracle Agnostic:} \SecPLF\ functions as an integral security measure in the architecture of a standard PLF, irrespective of the oracle source. This oracle-agnostic property allows \SecPLF\ to integrate into any standard PLF, enhancing its applicability.
\end{enumerate}

Our analysis and evaluation have demonstrated the aforementioned advantages. 
Drawing upon market data from the last three years, we have found that \SecPLF\ can achieve high levels of confidence (e.g., $1 - 10^{-5}$) in fending off potential arbitrage opportunities, provided the parameters are appropriately configured.
Furthermore, \SecPLF\ stands out as a resource and cost-efficient solution. 
Even when we consider a worst-case scenario overhead costs, these costs remain considerably minor when compared to the operational costs of a PLF. 
Thus, \SecPLF\ offers practical, robust, and economical protection against flash loan-driven oracle manipulation attacks.

The remainder of this paper is structured as follows: Section~\ref{sec: related} provides a survey of related work, while Section~\ref{sec: pre} introduces the necessary background information required for understanding our methodology. We then formalize the operational model for a standard PLF in Section~\ref{sec: PLF} and describe the adversary model in Section~\ref{sec: ad-model}. 
Section~\ref{sec: secPLF} details our proposed solution, \SecPLF. We evaluate our approach in Section~\ref{sec: analysis} and draw conclusions in Section~\ref{sec:concl}.
\section{Related Work}\label{sec: related}

This section presents the related work, including DeFi security, oracle manipulation attacks, and traditional finance security.

\subsection{Decentralized Finance Security}

DeFi security, a burgeoning research area, addresses the challenges posed by the rapid growth and adoption of DeFi platforms and the increasing complexity of financial attacks targeting them~\cite{li2022security,zhou2023sok, wang2021blockeye}. 
The ``blockchain oracle problem'' is a pivotal concern, leading to significant losses in DeFi projects~\cite{caldarelli2020understanding,caldarelli2021blockchain}.
Solutions like decentralized oracles and security platforms are suggested but not widely implemented~\cite{oracle-problem}. \SecPLF\ offers an innovative solution for preventing oracle manipulation attacks in this context.

Oracle manipulation attacks in PLFs are of particular concern, causing substantial damages~\cite{rekt}. 
Massimo et al. highlighted the security issues in PLFs, including smart contract vulnerabilities and oracle manipulation attacks~\cite{bartoletti2021sok}. \SecPLF\ targets these flash loan-driven attacks.

\subsection{Price Oracles and Manipulation Attacks}

There is an increasing focus on oracle manipulation attacks, especially involving flash loans~\cite{angeris2020improved, flash-loan, flashot, first-step}. Early research by Yixin et al. demonstrated the potential for rapid exploitation through atomic transactions, though without proposing countermeasures~\cite{flash-loan}. Various strategies have emerged to protect PLFs, focusing on the integrity of oracle data:

\begin{itemize}[leftmargin=*]
\item \textbf{Price Feeds and Averaging}: Strategies like time-weighted average price (TWAP) introduced by Uniswap V3 aim to mitigate outliers~\cite{adams2022uniswap}, but Makinga et al. show some manipulative activities can bypass TWAP oracles~\cite{mackinga2022twap}.
\item \textbf{Timelock Mechanisms}: Delaying oracle usage can mitigate attacks, but may not always be effective~\cite{ezzat2022timelock}.
\item \textbf{Staking and Reputation Systems}: These systems incentivize accuracy but are susceptible to manipulative behaviors~\cite{pauwels2022zkkyc}.
\item \textbf{Multiple Oracle Sources}: Using a weighted average from various sources increases accuracy but adds complexity~\cite{synthetix, hackmd-oracle}.
\end{itemize}

\subsection{Traditional Finance Security}

In traditional finance, circuit breakers prevent significant price manipulations~\cite{qin2021cefi}. 
Studies affirm their effectiveness in combating short-term manipulations~\cite{Santoni1993CircuitBA, Lauterbach1993StockMC, li2021impacts}. 
However, despite their efficacy, the adaptation of circuit breakers to the DeFi environment has not yet been achieved.
\SecPLF\ adapts a similar concept for the DeFi space to counter oracle manipulation attacks.

\section{Background}\label{sec: pre}

This section provides essential background on blockchains and DeFi applications to facilitate a better understanding of \SecPLF.

\subsection{Decentralized Finance (DeFi)}

DeFi is a blockchain-based form of finance that doesn't rely on central financial intermediaries such as brokerages, exchanges, or banks to offer financial services~\cite{werner2022sok}.
Instead, it utilizes smart contracts~\cite{Grish2018} on blockchains, predominantly Ethereum. 
Offering services like lending, insurance, and trading, DeFi's growth has been exponential, leading to considerable challenges, particularly in security and privacy~\cite{defi-market}, necessitating continual research.

\subsection{Oracles} 

An oracle serves as a bridge between the blockchain environment and the real world~\cite{sok-oracle,bowen2020oracle}. 
As blockchain systems are designed to be isolated from external influences to ensure security, most blockchain applications, particularly in DeFi projects, require specific information from outside the blockchain to trigger the execution of smart contracts. 
Oracles fulfill this need, operating as on-chain APIs that bring external information into the smart contracts. 
They can report on a wide array of data, such as the exchange rate of ETH/USD on Binance or the winners of the 2021 NBA Championship. 
Moreover, oracles can be bi-directional, not only fetching data but also transmitting information out to the real world.

\subsection{Flash Loans}

Flash loans, a novel DeFi tool, have transformed the ways of obtaining loans in the blockchain environment. 
Unlike traditional loans, flash loans allow users to borrow any amount of assets without collateral, provided that the borrowed assets are returned within the same transaction~\cite{chandler2022defi,first-step}. 
This unique feature enables a myriad of use cases, including arbitrage, collateral swapping, and self-liquidation, among others.
However, it also introduces new types of risks and attacks, like the notorious oracle manipulation attacks that could lead to significant financial losses~\cite{rekt}.

\subsection{Automated Market Maker-based Decentralized Exchanges (AMM DEXs)}

AMM DEXs represent a novel approach to decentralized trading, which eliminates the need for an order book~\cite{sok-amm}. 
Instead of matching buyers and sellers to determine prices and execute trades, AMM DEXs use a mathematical formula to set the price of a token.
AMMs allow digital assets to be traded in a permissionless and automated way by using liquidity pools rather than a traditional market of buyers and sellers. 
Popular examples include Uniswap~\cite{uniswap}, Curve~\cite{curve}, Balancer~\cite{balancer}, Bancor~\cite{bancor}, TerraSwap\cite{terraswap}, and Raydium~\cite{raydium}.

\subsection{Protocols for Loanable Funds (PLFs)}

PLFs are a critical component of the DeFi ecosystem, facilitating the majority of lending and borrowing activities within blockchain networks~\cite{gudgeon2020defi}. 
Through automated and programmatically-enforced smart contracts, PLFs allow users to lend their crypto assets in a pool from which borrowers can borrow, often requiring over-collateralization ~\cite{perez2021}.
PLFs dynamically adjust interest rates based on supply and demand, aiming to balance the liquidity in the lending pools. 
Some notable examples of PLFs include Aave, Compound, and MakerDAO. 
While these protocols have democratized access to financial services, they are not without risk, as seen in several instances of security breaches and manipulative attacks.

\section{Standard PLF Model}\label{sec: PLF}

This section formalizes a standard model that PLFs utilize in practice. 
First, we introduce the risk considerations and define the security notion of \emph{safe collateralization} that a standard PLF uses to mitigate these risks. 
Next, we describe the factors and mechanisms a PLF employs and define a collateralization-safe PLF model based on this security notion. 
Finally, we highlight the benefits and potential risks of the usage of price oracles for this PLF model.

\subsection{Standard Risks in PLFs}\label{subsec: plf-security}

The two primary risks that lead to insolvency and concern a standard PLF\textemdash highlighted by Kao et al. in \cite{Kao2020AnAO}\textemdash are as follows.
\begin{enumerate}[leftmargin=*]
    \item Arbitrage risk: It refers to the potential of traders and participants to exploit the price rate discrepancies between PLFs, creating profit opportunities.
    To avoid these price discrepancies, PLFs use price oracles to fetch the latest market price for each asset. 
    These discrepancies typically arise for a short time based on the frequency of price oracle usage.
    \item Under-collateralization risk: 
    It occurs when the collateral value held by a borrower is insufficient to cover the value of the borrowed assets, including interest accrued.
    If a borrower becomes under-collateralized, the PLF may not be able to fully recover the loans, leading to potential losses for lenders and liquidity providers. 
    It threatens the stability and solvency of the PLF.
\end{enumerate}

\subsection{Safe Collateralization Approach}
In the next two sub-sections, we model a $PLF$ that addresses the risk of under-collateralization using a \emph{safe collateralization} approach.
The definition of Safe collateralization is as follows.

\begin{definition}[Safe collateralization]
\label{def:safe-col}
It refers to an over-collateralization approach that aims to ensure that the protocol remains solvent. 
Specifically, this approach ensures that the total value of outstanding loans and obligations does not exceed the available funds and assets.\footnote{This definition is inspired from \cite{bartoletti2021sok}.}
\end{definition}

This model involves the calculation of a liquidation threshold to secure the PLF from under-collateralization. 
The formal definition of the liquidation threshold is as follows. 

\begin{definition}[Liquidation threshold]
\label{def:liq-thresh}
    It is the specified value at which an asset must be sold or liquidated to limit additional losses or manage risk. 
    When a collateral value drops below this threshold, PLF liquidates the collateral to mitigate losses.
\end{definition}

A PLF employing the safe collateralization approach assumes that loans are secured by collateral.
Therefore, liquidations are incentivized to recover loans if borrowers fail to repay. 
However, collateral liquidation is exposed to risks.
The incentive to liquidate is only effective if the liquidated value of seized collateral is higher than the value of the repaid loan amount, implying a profit. 
Smart contracts are used to automate the liquidation process which reduces the risk of fraud or manipulation.
We now describe the factors and mechanisms that PLFs use to determine these thresholds in case of liquidations.

\subsubsection{Liquidation threshold calculation factors}

Intuitively, the liquidation threshold is a balance between minimizing the risk of default and maximizing the protocols' profitability. 
The liquidation threshold for a PLF is typically determined by two parameters, which are:

\begin{itemize}[leftmargin=*]
    \item Value of a collateral asset ($c$): It refers to the value of a collateralized asset held by a borrower. 
    If the protocol has a low value for $c$, it may indicate that the PLF is close to the risk of under-collateralization. 
    \item Value of an outstanding loan ($l$): It refers to the value of the collateralized loan. 
    If the protocol has a high value for $l$, it may indicate the risk of under-collateralization.
\end{itemize}

\subsubsection{Liquidation threshold calculation mechanism}

\begin{table}[t]
    \centering
    \caption{PLF: State parameter notations}    
    \label{tab:liquid}
    \begin{tabular}{c|p{6cm}}
    \hline
         \textbf{Notation} & \textbf{Meaning} \\
         \hline
         $c$ & Value of a collateral asset held by user\\
         $l$ & Value of an outstanding loan \\
         $E$ & Collateralization ratio; $E = \frac{c}{l}$ \\
         $\epsilon$ & Safe collateralization ratio\\
         $LT$ & Liquidation threshold; $LT = \epsilon \cdot l$ \\
         \hline
    \end{tabular}
\end{table}

Table~\ref{tab:liquid} provides a brief description of the PLF parameters relevant to the calculation of the liquidation threshold ($LT$). 
The key idea of the algorithm to calculate $LT$ in a PLF using ``Safe collateralization" (Definition~\ref{def:safe-col}) is described below. 
A safe collateralization ratio $\epsilon$ is established for a PLF based on the volatility of the collateral asset and the desired level of risk \cite{aave-docs}.
$LT$ is then set at a level that ensures that the collateralization ratio ($E$) does not fall below the safe collateralization ratio ($\epsilon$), i.e. $E \ge \epsilon$. 
In the case of a standard PLF using a safe collateralization approach, the liquidation threshold for a loan $l$ can be defined as the safe collateralization ratio ($\epsilon$) times the value of the loan ($l$) i.e. $LT = \epsilon \cdot l$.

\subsection{Collateralization-safe PLF Model}

To model $PLF$ using safe collateralization, we use the parameters defined in Table~\ref{tab:liquid} to define the safe-collateralization rule i.e. $E \ge \epsilon$. 
The safe-collateralization rule ensures that the total value of the collateral is greater than or equal to the safe collateralization ratio times the value of the total loan.
We now define collateralization-safe PLF which employs the security notions provided in Definitions ~\ref{def:safe-col}, and ~\ref{def:liq-thresh}.
\footnote{This definition is inspired from~\cite{Kao2020AnAO, bartoletti2021sok}.}
\begin{definition}[Collateralization-safe PLF]\label{def:e-safe}
    A $PLF$ is collateralization-safe if the safe-collateralization rule is satisfied, i.e.
    \begin{therule}\label{rule:e-safe}
    $E \ge \epsilon \iff c \ge LT \iff c \ge \epsilon \cdot l.$
    \end{therule}
\end{definition}

The safe-collateralization rule can be used to monitor the state of PLF and ensure that it remains solvent and secure from under-collateralization.
If the rule is violated, it indicates that the protocol is at risk of under-collateralization, and corrective measures need to be taken. 
If any collateral $c$ falls below their liquidation threshold $LT$ i.e. $c < \epsilon \cdot l$, PLF liquidates $c$ in time so that no loss is incurred \cite{aave-docs, compound-docs}.
The main focus of this security model is to ensure that the PLF remains solvent and follows the security notions of Definitions~\ref{def:safe-col}, ~\ref{def:liq-thresh}, and ~\ref{def:e-safe}. 
This model allows for quantification of the degree of security against under-collateralization using the safe collateralization ratio $\epsilon$.

\subsection{Price Oracle Usage in PLFs}

Several PLFs employing the safe collateralization approach use price oracles to fetch the spot price of crypto-assets.
PLFs use these oracles to monitor safe collateralization at least every 10 hours \cite{aave-docs, compound-docs}. Later, we use this minimum frequency of price oracle usage (of 10 hours) in our practicality analysis (Section~\ref{sec: practical}).
Additionally, PLFs use these price oracles to keep up with the latest market price, reducing the risk of arbitrage attacks caused due to price shocks.

However, the usage of price oracles may introduce the risk of arbitrage from a novel type of attack observed in recent years, called an oracle manipulation attack. 
Since price oracles are critical in the calculation of the liquidation threshold, and henceforth the borrowing limit on any loans, it can lead to under-collateralization if this price is manipulated. 
In addition, a novel DeFi service called flash loans can help facilitate oracle manipulation attacks with essentially no risk to the attacker. 
Moving forward, we consider this collateralization-safe PLF (from Definition~\ref{def:e-safe}) using price oracles as the model for a standard PLF.
We refer to it as $PLF$ (italic) for simplicity. 

Further, we consider this $PLF$ to always satisfy Rule~\ref{rule:e-safe}. 
This rule ensures that in case $PLF$ uses an oracle price that contradicts Rule~\ref{rule:e-safe}, that is if the price of a collateral drops below the liquidation threshold, it liquidates the collateral immediately. 
Hence, this standard $PLF$ is secure from under-collateralization according to Definition~\ref{def:e-safe}. 
Next, in Section~\ref{sec: ad-model}, we present an adversary model for this $PLF$ formalizing the aforementioned flash loan-driven oracle manipulation attack.

\section{Adversary Model}\label{sec: ad-model}

In this section, we first define a basic model of blockchains and the adversary. 
Next, we define relevant notations and formalize an oracle manipulation attack on a standard $PLF$ defined in Section~\ref{sec: PLF}.
We then model a general flash loan-driven oracle manipulation attack transaction $\mathcal{T}$, provide its formal attack steps, and define an oracle-manipulation-secure $PLF$.
Moreover, we illustrate an example of a formal oracle manipulation attack transaction $\mathcal{T}$ in Figure~\ref{fig:attack}.

\subsection{Basic Blockchain and Adversary Model}

\begin{table}[t]
    \centering
    \caption{Blockchain and adversary model notations}\label{tab:notations}
    \begin{tabular}{c|c}
    \hline
        \textbf{Notation} & \textbf{Meaning}\\
        \hline
        $\mathsf{S}_i$ & $i^{th}$ state of the blockchain\\
        $\mathcal{T}_i$ & $i^{th}$ sub-transaction; $\mathcal{T}_i: \mathsf{S}_{i-1} \rightarrow \mathsf{S}_i$\\
        $\mathcal{T}$ & Attack transaction: $(\mathcal{T}_1, \cdots, \mathcal{T}_n)$\\
        $\tau_i$ & $i^{th}$ transaction-step\\
        \hline
    \end{tabular}
\end{table}

We model the interaction between users and the blockchain as a state transition system.
We assume one computationally bounded and economically rational adversary $\mathbb{A}$. 
The adversary is not required to provide its collateral to perform the attacks described below.
We assume that the adversary is financially capable of paying any transaction fees associated with the network.
Now, consider $\mathbb{A}$ creates a set $M$ of $m$ malicious smart contracts.
It is reasonable to assume that the adversary $\mathbb{A}$ and all $m$ malicious smart contracts can interact with each other according to $\mathbb{A}$'s wish.

Here we define the notations of the adversary model summarized in Table~\ref{tab:notations}. 
Let the initial state of the blockchain be denoted by $\mathsf{S}_0$. 
To execute an attack, $\mathbb{A}$ initiates a transaction $\mathcal{T}$ by calling a malicious smart contract belonging to $M$.
A \emph{transaction} is defined as an indexed-sequence of $n$ atomic state-transition-functions, i.e. $\mathcal{T} = (\mathcal{T}_1, \cdots, \mathcal{T}_n)$. \emph{Atomic state transitions} are indivisible and irreducible series of state-change operations on the blockchain such that either all occurs, or nothing occurs.
Here, $\mathcal{T}_i: \mathsf{S}_{i-1} \rightarrow \mathsf{S}_i$ represents an atomic state transition function, i.e. it is a definite series of operations on the blockchain state. 
Let $\mathcal{T}_i$ be referred to as $i^{th}$ \emph{sub-transaction}.
The state of the blockchain after transaction $\mathcal{T}$ is $\mathsf{S}_n$, i.e. $\mathcal{T}: \mathsf{S}_0 \rightarrow \mathsf{S}_n$.

Now consider an equivalent indexed-sequence of the transaction as  $\mathcal{T} = (\tau_1, \tau_2, \cdots, \tau_k)$.
Here, $\tau_i$ is an indexed sequence of sub-transactions, i.e. $\tau_i = (\mathcal{T}_x, \cdots, \mathcal{T}_y)$ where $0 < x < y \leq n$ and $x, y \in \mathbb{Z}^+$.
$\mathsf{S}_{y}$ denotes the state of the blockchain after the transaction step $\tau_i$, i.e. $\tau_i: \mathsf{S}_{x - 1} \rightarrow \mathsf{S}_y$ where $i < x < y \leq n$.
Trivially, this sequence must be mutually exclusive, topologically sorted, and completely exhaustive for a given transaction $\mathcal{T}$.

\subsection{Formalizing Oracle Manipulation Attack}

$PLF$s using price oracles may be vulnerable to flash loan-driven oracle manipulation attacks. 
In this attack, the adversary can manipulate the price oracle of a collateralized asset by using flash loans. 
This manipulation is done by executing a large trade on a $DEX$ that provides the oracle that $PLF$ uses.
The attacker can use the flash loan to borrow large amounts of cryptocurrency assets. 
This borrowed asset may be used to deposit on the PLF as collateral for loans, which is also the cost of this attack.
The adversary then uses the majority of the flash loan amount to execute trades on a DEX that relies on the same oracle used by $PLF$.
By executing a large enough trade on $DEX$, the adversary can manipulate the price of the collateral, causing it to appear more valuable than it is. 
Once the price of the collateral has been manipulated, the attacker can use it to secure a loan on the PLF, borrowing a larger amount of funds. 
The adversary profits an amount equal to this loan minus the cost of the attack, as her profit.

\subsubsection{Formal attack transaction steps}\label{subsec:formal-steps}

\begin{table}[t]
    \centering
    \caption{Transaction-steps ($\tau_i$)}\label{tab:tx-steps}
    \begin{tabular}{c|c}
        \hline
        \textbf{Notation} & \textbf{Meaning}\\
        \hline
        $\mathcal{F}(X, \mathbf{A})$ & Take $X$ amount of $\mathbf{A}$ as flash loan \\
        $\mathcal{D}(\mathbf{A}, \mathbf{cA})$ & Deposit $\mathbf{A}$ as collateral to get $\mathbf{cA}$ \\ 
        $\mathcal{S}(\mathbf{A}, \mathbf{B})$ & Swap: Sell $\mathbf{A}$ to buy $\mathbf{B}$\\
        $\mathcal{B}(\mathbf{cA}, \mathbf{B})$ & Borrow $\mathbf{B}$ using $\mathbf{cA}$ on $PLF$ \\
        $\mathcal{P}(X, \textbf{A})$ & Payback the flash loan with $X$ amount of $\mathbf{A}$\\
        \hline
    \end{tabular}
\end{table}

Here, we model a formal general flash loan-driven oracle manipulation attack transaction $\mathcal{T}$. 
Table~\ref{tab:tx-steps} summarizes a short description of the crucial transaction steps that are necessary for an oracle manipulation attack.
Consider a PLF using a price oracle $\mathbb{O}_{\mathbf{A}}$ provided by a Decentralized Exchange $DEX$ to determine the price of an asset $\mathbf{A}$.
The adversary $\mathbb{A}$ initiates a transaction $\mathcal{T}$, an indexed sequence of $k$ transaction steps i.e. $\mathcal{T} = (\tau_1, \cdots, \tau_k)$. 
The critical transaction steps, along with their reasoning, that are necessary for the attack are described below in topological order:

\begin{itemize}[leftmargin=2em]
    \item[$\mathcal{F}$] $(X, \mathbf{A});$ Take $X$ amount of asset $\mathbf{A}$ in flash loan as $\tau_1$.
    \item[$\mathcal{S}$] $(\mathbf{A}, \mathbf{B});$ This transaction-step must use a small fraction of $X$. 
    \item[$\mathcal{D}$] $({\mathbf{B}}, \mathbf{cB});$ $\mathbb{A}$ deposits $Y$ amount of asset $\mathbf{B}$ on $PLF$, which is the cost of this attack. 
    Here `$\mathbf{c}$' in $\mathbf{cA}$ denotes a \emph{collateralized-asset}. 
    $\mathbb{A}$ plans to distort $\mathbb{O}_\mathbf{B}$ by distorting the price of the pair $\mathbf{A}/\mathbf{B}$ on $DEX$, which is used in the price oracle $\mathbb{O}_{\mathbf{B}}$.
    \item[$\mathcal{S}$] $(\mathbf{A}, \mathbf{B});$ $\mathbb{A}$ distorts the price of $\mathbb{O}_\mathbf{B}$ by a factor of $\Theta$ using the remaining amount of $\mathbf{A}$. 
    $PLF$ receives $\mathbb{O}_{\mathbf{B}}$ as input which represents the price of $\mathbf{cB}$ (linearly proportional). 
    Note that the majority of the flash loan amount goes here.
    \item[$\mathcal{B}$] $(\mathbf{cB}, \mathbf{C});$ $\mathbb{A}$ redeems the collateral $\mathbf{cB}$ at the distorted price to borrow a loan $L$ in asset $\mathbf{C}$. 
    Here, $\mathbb{A}$ can borrow this loan up to her maximum borrowing limit from Definition~\ref{def:e-safe}.
    \item[$\mathcal{S}$] $(\mathbf{B}, \mathbf{A});$ This step is reverting the swap in which the adversary manipulated the price to get back $X$ amount of $\mathbf{A}$ to pay back the flash loan.
    \item[$\mathcal{P}$] $(X, \mathbf{A});$ Payback the flash loan in asset $\mathbf{A}$ as $\tau_k$. 
    The adversary gains a profit $G$ which is calculated as the borrowed loan $L$ minus the cost of the attack.
\end{itemize}

\subsubsection{Formal attack transaction notations}

\begin{table}[t]
    \centering
    \caption{Formal attack transaction notations}\label{tab:T-notations}
    \begin{tabular}{c|c}
        \hline
        \textbf{Notation} & \textbf{Meaning}\\
        \hline
        $\mathbb{O}_\mathbf{B}$ & Oracle price of asset $\mathbf{B}$ before the transaction $\mathcal{T}$\\
        $Y$ & Amount of $\mathbf{B}$ deposited in $\mathcal{D}(\mathbf{B}, \mathbf{cB})$\\
        $\Theta$ & Distortion factor of $\mathbb{O}_\mathbf{B}$ right before $\mathcal{B}(\mathbf{cB}, \mathbf{C})$\\
        $\max(L)$ & Borrowing limit of $\mathbb{A}$ (in USD) right before $\mathcal{B}(\mathbf{cB}, \mathbf{C})$\\
        $L$ & Amount of loan in $\mathbf{C}$ borrowed in step $\mathcal{B}(\mathbf{C}, \mathbf{cB})$\\
        $G$ & Profit (or gain) of $\mathbb{A}$ from the transaction $\mathcal{T}$\\
        \hline
    \end{tabular}
\end{table}

\begin{figure*}[t]
    \centering
    \includegraphics[width=0.63\textwidth]{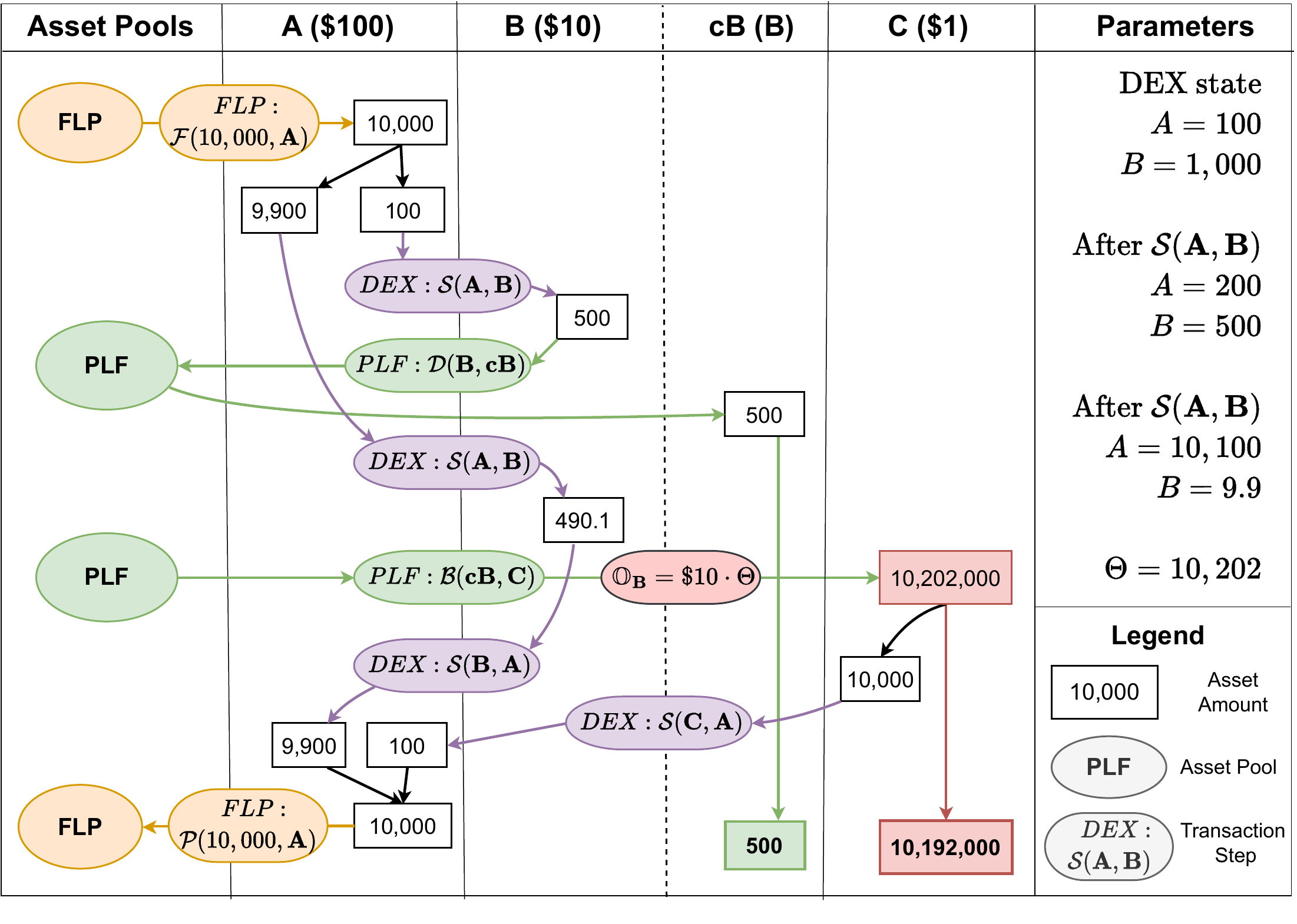}
    \caption{Flash loan-driven oracle manipulation attack example of transaction $\mathcal{T}$: $\mathbb{A}$ distorts the initial price of \textbf{B} ($\$10$) by a factor of $\Theta = 10,202$. $\mathbb{A}$ profits ($G$) $10,192,000$ amount of \textbf{C} priced at $1$ USD (highlighted in red).}\label{fig:attack}
\end{figure*}

Table~\ref{tab:T-notations} summarizes the notations used in the aforementioned formal attack transaction $\mathcal{T}$. 
Let $\mathbb{O}_\mathbf{B}$ be the initial oracle price of asset $\mathbf{B}$ before the transaction $\mathcal{T}$. 
In transaction step $\mathcal{S}(\mathbf{A}, \mathbf{B})$, $\mathbb{A}$ distorts this initial price by a factor of $\Theta$. 
This distortion increases $\mathbb{A}$'s borrowing limit $\max(L)$ by the same factor.
It is calculated from Definition~\ref{def:e-safe} as the USD value of the deposited collateral $Y\ \mathbf{B}$ after a distortion of $\Theta$ in $\mathbb{O}_\mathbf{B}$ over the safe collateralization ratio $\epsilon$, i.e. 
\begin{equation}\label{eq-max-L}
\max(L) = \frac{\mathbb{O}_\mathbf{B} \cdot Y \cdot \Theta}{\epsilon}.
\end{equation}
To gain a profit from the attack transaction $\mathcal{T}$, $\mathbb{A}$ borrows a loan $L$ in asset $\mathbf{C}$. 
Here $L$ calculates using $\max(L)$ (from Equation~\ref{eq-max-L}) as 
\begin{equation}\label{eq-L}
L = \frac{\max(L)}{\mathbb{O}_\mathbf{C}}.
\end{equation}

$\mathbb{A}$'s profit (or gain) can be calculated in USD as her borrowing limit minus the cost of the attack.
Let $G$ be the profit (in USD) of $\mathbb{A}$ in $\mathcal{T}$.
Hence, $G$ calculates as $\max(L)$ (from Equation~\ref{eq-max-L}) minus the value of the deposited collateral $Y$ (which is the cost of the attack), i.e.
\begin{equation}\label{eq-G}
G = \mathbb{O}_\mathbf{B} \cdot Y \cdot \left(\frac{\Theta}{\epsilon} - 1\right).
\end{equation}

We now calculate an upper bound for $\mathbb{A}$'s profit $G$ as $\max(G)$ (from Equation~\ref{eq-G}) using the maximum values of $Y$ and $\Theta$ as $\max(Y)$ and $\max(\Theta)$, respectively. 
Here, $\max(G)$ calculates as
\begin{equation}\label{eq-max-g}
\max(G) = \mathbb{O}_\mathbf{B} \cdot \max(Y) \cdot \left(\frac{\max(\Theta)}{\epsilon} - 1\right).
\end{equation}
Note that, since $\mathbb{A}$ uses a negligible fraction of the flash loan to get the collateral $Y\ \mathbf{B}$, $\max(G)$ highly depends on $\max(\Theta)$.

\subsubsection{Oracle-manipulation-secure PLF}

Here we define an Oracle-manipulation-secure PLF based on the adversary model, the formal attack transaction $\mathcal{T}$, and the upper bound on $\mathbb{A}$'s profit $\max(G)$.

\begin{definition}[Oracle-manipulation-secure PLF]\label{def: sec-plf}
    A $PLF$ is secure against flash loan-driven oracle manipulation attacks if the attack transaction $\mathcal{T}$ is unsuccessful. 
    Equivalently, if $\mathbb{A}$'s maximum profit $\max(G)$ (from Equation~\ref{eq-max-g}) in the transaction $\mathcal{T}$ is $0$, i.e.
    \begin{therule}\label{rule-maxTheta-maxG}
    $\mathbb{O}_\mathbf{B} \cdot \max(Y) \cdot \left(\frac{\max(\Theta)}{\epsilon} - 1\right) = 0 \iff \max(\Theta) = \epsilon.$
    \end{therule}
\end{definition}

Intuitively, Definition~\ref{def: sec-plf} states that a PLF is secure against the attack transaction $\mathcal{T}$ if the maximum price distortion factor ($\max(\Theta)$) is equal to the safe collateralization ratio ($\epsilon$) i.e. (Rule~\ref{rule-maxTheta-maxG}) $$\max(\Theta) = \epsilon \iff \max(G) = 0.$$

\subsection{Oracle Manipulation Attack Example}

\subsubsection{Attack example illustration} 

Figure~\ref{fig:attack} illustrates a flash loan-driven oracle manipulation attack on a PLF.
We utilize the Flashot diagram provided by Yixin et al.~\cite{flashot} to illustrate the attack transaction $\mathcal{T}$ as described in Section~\ref{subsec:formal-steps}. 
Below, we describe the terms and notations along with some key ideas of the attack example in Figure~\ref{fig:attack}.
The initial market price of assets $\mathbf{A}, \mathbf{B}$, and $\mathbf{C}$ are $\$100, \$10$, and $\$1$ before the transaction $\mathcal{T}$, respectively (mentioned after the assets). 
The timeline of the transaction $\mathcal{T}$ is shown from top to bottom.
Asset pools like Flash Loan Provider ($FLP$), and $PLF$ are on the left. 
The parameters shown on the right are the number of liquidity reserves of asset pool $\mathbf{A}/\mathbf{B}$ on the Decentralized Exchange ($DEX$) which is initially set as $A = 100$ and $B = 1,000$. 
The values of the parameters are calculated by $DEX$ using its AMM algorithm, which operates using a conservation function. 
In this example, we use Uniswap's conservation function: $A \times B = K$, where $K$ is a constant ($K= 100,000$). 

\par $PLF$ receives the price of asset $\mathbf{B}$ from the price oracle provided by $DEX$ as its input, $\mathbb{O}_\mathbf{B}$. $\mathbb{A}$ wants to manipulate the price of asset $\mathbf{cB}$, which is proportional to $\mathbb{O}_\mathbf{B}$. 
After this price is manipulated, $\mathbb{A}$ can then borrow any asset on $PLF$ up to its borrowing limit. 
This borrowing limit is directly proportional to \textit{(i)} the price of $\mathbf{cB}$, which was manipulated; and \textit{(ii)} inversely proportional to the safe collateralization ratio $\epsilon$ (from Definition~\ref{def:e-safe}).
Since, $\mathbb{A}$ can profit an amount up to this borrowing limit, we take $\epsilon = 500 \%$ as it is significantly more than the typical upper bound used for this ratio \cite{aave-docs}. 
This case represents the worst-case scenario for the adversary as the more the safe collateralization ratio, the less the borrowing limit, implying less profit for the adversary $\mathbb{A}$.

\subsubsection{Transaction steps in  $\mathcal{T}$}

The transaction steps ($\tau_i \in \mathcal{T}$) of the attack transaction $\mathcal{T} = (\tau_1, \cdots, \tau_k)$ shown in Figure~\ref{fig:attack} are explained as follows.
\begin{itemize}[leftmargin=2em]
    \item[$\mathcal{F}$] $(X, \mathbf{A})$; $\mathbb{A}$ borrows a flash loan of $X = 10,000\ \mathbf{A}$ in $\tau_1$. \item[$\mathcal{S}$] $(\mathbf{A},\mathbf{B})$; $\mathbb{A}$ swaps $100\ \mathbf{A}$ on $DEX$ providing the price oracle to get $500\ \mathbf{B}$. 
    \item[$\mathcal{D}$] $(\mathbf{B}, \mathbf{cB})$; $\mathbb{A}$ deposits $Y = 500\ \mathbf{B}$ on $PLF$ to get $500\ \mathbf{cB}$. 
    \item[$\mathcal{S}$] $(\mathbf{A}, \mathbf{B})$; $\mathbb{A}$ uses the remaining flash loan amount of $9,900\ \mathbf{A}$  to get $490.1\ \mathbf{B}$. $\mathbb{A}$ distorts the price of asset $\mathbf{B}$ on $DEX$, which is proportional to the reserve amount of $\mathbf{A}$ over the reserve amount of $\mathbf{B}$. The distortion in $\mathbb{O}_\mathbf{B}$ calculates as $$\Theta = \frac{10,100}{9.9} \cdot \frac{1,000}{100} \approx 10,202.$$ 
    \item[$\mathcal{B}$] $(\mathbf{cB}, \mathbf{C})$; $\mathbb{A}$ borrows a loan $L$ against the collateral $\mathbf{cB}$ to gain a profit $G$ from the attack transaction $\mathcal{T}$ i.e $G > 0$. $PLF$ receives the distorted oracle price $\mathbb{O}_\mathbf{B} = \$10 \cdot \Theta$ as input (highlighted in red). 
    The borrowing limit of $L$ is calculated in USD (from Equation~\ref{eq-max-L}) as the value of $500\ \mathbf{cB}$ at the distorted price over the safe collateralization ratio $\epsilon$, i.e. $$\max(L) = \frac{\$ 10 \cdot 500 \cdot 10,202}{500\ \%} = \$10,202,000.$$  $\mathbb{A}$ borrows a loan $L$ up to this borrowing limit in an asset $\mathbf{C}$ (from Equation~\ref{eq-L}), i.e. $$L = \frac{\max(L)}{\mathbb{O}_\mathbf{C}} = 10,202,000 \ \mathbf{C}.$$
    \item[$\mathcal{S}$] $(\mathbf{B}, \mathbf{A})$; $\mathbb{A}$ swaps back $490.1\ \mathbf{B}$ on $DEX$ to get $9,900\ \mathbf{A}$.
    \item[$\mathcal{S}$] $(\mathbf{C}, \mathbf{A})$; $\mathbb{A}$ uses $10,000\ \mathbf{C}$ to get $100\ \mathbf{A}$. This amount is the cost of the attack, which is equivalent to the value of deposited collateral $500\ \mathbf{B}$ i.e. $100 \cdot \mathbb{O}_\mathbf{A}$ (in USD).
    \item[$\mathcal{P}$] $(X, \mathbf{A})$; Finally, $\mathbb{A}$ pays back the flash loan in $\tau_k$. $\mathbb{A}$'s profit $G$ (highlighted in red) is calculated from Equation~\ref{eq-G} as $$G = \$10 \cdot 500 \cdot \left( \frac{10,202}{500 \%} - 1 \right) = \$ 10,192,000.$$
\end{itemize}

The collateral provided to $PLF$, that is $500\ \mathbf{cB}$, remains there (highlighted in green). 
This amount is the cost of the attack, which is negligible compared to $\mathbb{A}$'s profit. 
Hence, it can be ignored and left as locked collateral in $PLF$. 
Moreover, flash loan providers might require a small fraction of the fee to use their service.
Typically this fee is less than $0.1\%$ of the flash loan borrowed.
Since this fraction is negligible to the price distortion and $\mathbb{A}$' return on investment, we neglect this fee for simplicity.

\section{\SecPLF: Oracle Manipulation Secure PLF Solution}\label{sec: secPLF}

\begin{algorithm*}[t!]
\caption{\SecPLF: Secure Price Oracle Algorithm for asset $\mathbf{A}$}\label{algo1}
    \begin{algorithmic}[1]
    \Require{$\mathbb{O}_\mathbf{A}$, $\mathbb{B}$, $\epsilon$} \Comment{Oracle price of asset $\mathbf{A}$, block in which the $PLF$ receives the oracle, and safe collateralization ratio, respectively.}
    \Ensure{$P_\mathbf{A}$} \Comment{The price of asset $\mathbf{A}$ that the $PLF$ uses as a safe price against oracle manipulation attacks.}
    \If{$\mathbb{B}.id > \mathsf{A}.id$}
        \Comment{Update the state $\mathsf{A}$ if and only if the input is received in a new block.}
        \State{$\mathsf{A} \gets (\mathbb{B}.id, \min(\mathbb{O}_\mathbf{A}, \mathsf{A}.p \cdot \epsilon))$}
        \Comment{Maximum discrepency in price after $(\mathbb{B}.id - \mathsf{A}.id)$ block(s) is $\max(0, \mathbb{O}_\mathbf{A} - \mathsf{A}.p \cdot \epsilon)$.}
    \EndIf
    \State{$P_\mathbf{A} \gets \min(\mathbb{O}_\mathbf{A}, \mathsf{A}.p)$} \Comment{Update the output with the minimum of stored price $\mathsf{A}.p$ and oracle price $\mathbb{O}_\mathbf{A}$.}
    \State{\Return $P_\mathbf{A}$}
    \end{algorithmic}
\end{algorithm*}

In this section, we present \SecPLF\ which is an algorithm designed to secure a standard $PLF$ from flash loan-driven oracle manipulation attacks according to Definition~\ref{def: sec-plf}. 
Given an input price oracle $\mathbb{O}_\mathbf{A}$, \SecPLF\ algorithm outputs a price $P_\mathbf{A}$, which is based on a safe threshold to prevent the attack transaction $\mathcal{T}$. 
We first present an overview of \SecPLF\ explaining the key ideas behind this algorithm.
Next, we proceed by presenting the \SecPLF\ algorithm.
Then, we provide a Theorem stating that \SecPLF\ is an oracle-manipulation-secure PLF according to Definition~\ref{def: sec-plf} and show that \SecPLF\ prevents the attack transaction illustrated in Figure~\ref{fig:attack}. 
Further, we quantify the arbitrage risk resulting from the price differentials caused due to the imposed constraints on the output price.
Later, in Section~\ref{sec: analysis}, we quantify this risk based on this difference and use it in practicality analysis of \SecPLF.

\subsection{Overview}

\begin{table}[t]
    \caption{\SecPLF\ Algorithm Notations}
    \label{tab:design-notations}
    \begin{tabular}{c|p{6cm}}
    \hline
        \textbf{Notation} & \textbf{Meaning} \\
        \hline
        $\mathbb{O}_{\mathbf{A}}$ & Price of asset $\mathbf{A}$ provided by the oracle (input)\\
        $P_\mathbf{A}$ & The price of asset $\mathbf{A}$ that \SecPLF\ uses (output)\\
        $\mathbb{B}$ & Block in which the PLF receives  $\mathbb{O}_\mathbf{A}$ as input\\
        $\mathsf{A}$ & State of asset $\mathbf{A}$: $(id, p)$; the block index in which it was last updated, and price of asset $\mathbf{A}$\\
        \hline
    \end{tabular}
\end{table}

Following is a brief overview of the algorithm. 
The notations used in \SecPLF\ are summarized in Table~\ref{tab:design-notations}. 
This algorithm takes in three inputs \textit{(i)} $\mathbb{O}_\mathbf{A}$, which is the price oracle of an asset $\mathbf{A}$ that $PLF$ uses; \textit{(ii)} $\mathbb{B}$, the block in which $PLF$ receives the oracle; and \textit{(iii)} $\epsilon$, the safe collateralization ratio.
It stores a state $\mathsf{A}$ for each asset $\mathbf{A}$, in case $\mathbb{O}_\mathbf{A}$ is not safe to use. 
Finally, it gives an output $P_\mathbf{A}$, which denotes a safe price for asset $\mathbf{A}$ that prevents oracle manipulation attacks on $PLF$. 

The algorithm is explained as follows.
Initialize a state $\mathsf{A} = (id, p)$ in the PLF smart contract state for each asset $\mathbf{A}$ on $PLF$.
Here, $p$ is the last stored price of asset $\mathbf{A}$, and $id$ is the block index in which $\mathsf{A}$ was last updated. 
After fetching the price oracle $\mathbb{O}_\mathbf{A}$, update $\mathsf{A}$ if and only if the block index of the price oracle is greater than the last updated $id$ in $\mathsf{A}$.
This condition ensures that the state $\mathsf{A}$ updates at most once during the execution of each attack transaction $\mathcal{T}$. 
Update this state with the minimum of the oracle price $\mathbb{O}_\mathbf{A}$ and $\epsilon$ times $\mathsf{A}.p$.
Then, use the minimum of Oracle price $\mathbb{O}_\mathbf{A}$ and the last updated price state $\mathsf{A}.p$ as the output $P_\mathbf{A}$ of the algorithm. 
This restriction on the Oracle price ensures that $\mathbb{A}$ is unable to gain any profits from the transaction $\mathcal{T}$, which prevents oracle manipulation attacks (according to Definition~\ref{def: sec-plf}). 

However, using a stored price with the imposed constraints instead of the latest oracle price might allow some arbitrage attack opportunities on $PLF$. We define the Price-discrepancy case in Section~\ref{sec:quantify-risk}. It is the only case in the \SecPLF\ algorithm that allows arbitrage opportunities.
Then, we introduce a parameter $\Delta^\mathbb{B}_\mathbf{A}$ to quantify this risk.
Further, in Section~\ref{sec: analysis}, we show that this arbitrage case occurs with minimal probability in the practical setting.

\subsection{\SecPLF\ Algorithm}

The \SecPLF\ algorithm, given in Algorithm~\ref{algo1}, prevents oracle manipulation attacks according to Definition~\ref{def: sec-plf}. 
The algorithm is summarized as follows:

\begin{enumerate}[leftmargin=*]
    \item Let $\mathbb{O}_\mathbf{A}$ denote the oracle price of asset $\mathbf{A}$ as the input. 
    And, let $P_\mathbf{A}$ denote the price of asset $\mathbf{A}$ that $PLF$ uses to prevent oracle manipulation attacks, as the output of the algorithm.
    \item Store a state $\mathsf{A} = (id, p)$ for each asset $\mathbf{A}$.
    $\mathsf{A}.p$ stores the price of $\mathbf{A}$, and $\mathsf{A}.id$ stores the index of the block in which $\mathsf{A}$ was last updated.
    \item Let $\mathbb{B}.id$ denote the index of the current block in which $PLF$ receives the oracle $\mathbb{O}_\mathbf{A}$ as input. 
    Update the state $\mathsf{A}$ if and only if the block index $\mathbb{B}.id$ is greater than the stored state index $\mathsf{A}.id$, i.e.
    \begin{therule}\label{rule-update-A}
    $\mathsf{A} \gets (\mathbb{B}.id, \min(\mathbb{O}_\mathbf{A}, \mathsf{A}.p \cdot \epsilon)) \iff \mathbb{B}.id > \mathsf{A}.id.$
    \end{therule}
    \item Note that, in \SecPLF, the safe collateralization ratio $\epsilon$ is also the maximum factor of change allowed in $\mathsf{A}.p$ from its last updated state (from Rule~\ref{rule-update-A}).
    \item Now, update $P_\mathbf{A}$ as the minimum of oracle price $\mathbb{O}_\mathbf{A}$ and last updated price state $\mathsf{A}.p$, i.e. \begin{therule}\label{rule-update-P}
    $P_\mathbf{A} \gets \min(\mathbb{O}_\mathbf{A}, \mathsf{A}.p).$
    \end{therule}
    \item Finally, return $P_\mathbf{A}$ as the output, denoting the safe-to-use price of asset $\mathbf{A}$ against oracle manipulation attacks according to Definition~\ref{def: sec-plf}.
\end{enumerate}

\subsection{\SecPLF\ Theorem}

In this subsection, we provide a theorem to show that \SecPLF\ is an Oracle-manipulation-secure PLF according to Definition~\ref{def: sec-plf}. 
We start by stating a proposition on the \SecPLF\ algorithm below.

\begin{proposition}\label{prop1}
The \SecPLF\ algorithm ensures that the maximum achievable distortion $\max(\Theta)$ in the attack transaction $\mathcal{T}$ is equal to the safe collateralization ratio $\epsilon$, i.e. $\max(\Theta) = \epsilon$ (Rule~\ref{rule-maxTheta-maxG}).
\end{proposition}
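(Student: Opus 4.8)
The plan is to prove Proposition~\ref{prop1} in two halves: a \emph{security upper bound}, namely that no attack transaction $\mathcal{T}$ can force $PLF$ to consume a price larger than $\epsilon\cdot\mathbb{O}_\mathbf{B}$ at the borrowing step $\mathcal{B}(\mathbf{cB},\mathbf{C})$, and a \emph{tightness} claim, namely that some $\mathcal{T}$ achieves exactly this. Throughout, the quantity that actually drives the borrowing limit (and hence the gain $G$, via Equations~\ref{eq-max-L}--\ref{eq-G}) is the output $P_\mathbf{B}$ of Algorithm~\ref{algo1}, so the distortion factor $\Theta$ of $\mathbb{O}_\mathbf{B}$ relevant to $\mathcal{T}$ is the ratio $P_\mathbf{B}/\mathbb{O}_\mathbf{B}$; establishing $P_\mathbf{B}\le\epsilon\cdot\mathbb{O}_\mathbf{B}$ always and $P_\mathbf{B}=\epsilon\cdot\mathbb{O}_\mathbf{B}$ for some $\mathcal{T}$ yields $\max(\Theta)=\epsilon$, which is Rule~\ref{rule-maxTheta-maxG}.

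For the upper bound I would argue in four short steps. (1) Because $\mathcal{T}$ is atomic and, in particular, the flash loan taken in step $\mathcal{F}$ must be repaid in step $\mathcal{P}$ within the same transaction, $\mathcal{T}$ executes entirely inside a single block $\mathbb{B}$, so $\mathbb{B}.id$ is a fixed constant during $\mathcal{T}$. (2) By the guard of Rule~\ref{rule-update-A}, once any invocation of Algorithm~\ref{algo1} on the manipulated asset $\mathbf{B}$ sets $\mathsf{B}.id \gets \mathbb{B}.id$, every later invocation inside $\mathcal{T}$ sees the guard $\mathbb{B}.id > \mathsf{B}.id$ fail; hence $\mathsf{B}$ is updated \emph{at most once} during $\mathcal{T}$, so pinging the oracle at both the $\mathcal{D}$ and $\mathcal{B}$ steps cannot raise the stored price further. (3) Whenever that single update fires it assigns $\mathsf{B}.p \gets \min(\mathbb{O}_\mathbf{B}, \mathsf{B}.p_{\mathrm{pre}}\cdot\epsilon) \le \epsilon\cdot\mathsf{B}.p_{\mathrm{pre}}$, where $\mathsf{B}.p_{\mathrm{pre}}$ is the stored price just before $\mathcal{T}$; if no update fires then $\mathsf{B}.p = \mathsf{B}.p_{\mathrm{pre}} \le \epsilon\cdot\mathsf{B}.p_{\mathrm{pre}}$ since $\epsilon\ge 1$. (4) By Rule~\ref{rule-update-P} the price consumed is $P_\mathbf{B} = \min(\mathbb{O}_\mathbf{B},\mathsf{B}.p) \le \mathsf{B}.p \le \epsilon\cdot\mathsf{B}.p_{\mathrm{pre}}$; invoking the invariant that, absent prior manipulation, the stored price equals the honest oracle price, $\mathsf{B}.p_{\mathrm{pre}} = \mathbb{O}_\mathbf{B}$, we obtain $P_\mathbf{B}\le\epsilon\cdot\mathbb{O}_\mathbf{B}$, i.e. $\Theta\le\epsilon$.

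For tightness I would take the attack of Section~\ref{subsec:formal-steps}, but with the manipulating swap $\mathcal{S}(\mathbf{A},\mathbf{B})$ reordered to precede the first $PLF$ oracle read inside $\mathcal{T}$ and with a flash loan $X$ large enough that the induced raw DEX distortion exceeds $\epsilon$. Then at the one-and-only state update both minima resolve to their $\epsilon$-capped terms: $\mathsf{B}.p \gets \epsilon\cdot\mathbb{O}_\mathbf{B}$ and $P_\mathbf{B} = \epsilon\cdot\mathbb{O}_\mathbf{B}$, so $PLF$ consumes exactly $\epsilon\cdot\mathbb{O}_\mathbf{B}$ at $\mathcal{B}(\mathbf{cB},\mathbf{C})$ and $\Theta = \epsilon$. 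Together with the upper bound this gives $\max(\Theta)=\epsilon$.

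The step I expect to be the main obstacle is (2), ``at most one update per $\mathcal{T}$'': it must be argued cleanly from atomicity of $\mathcal{T}$ (which is also exactly why a flash-loan-driven attack cannot be spread across blocks to collect one $\epsilon$-bump per block) together with the strict inequality in the guard of Rule~\ref{rule-update-A}. A secondary point needing care is the invariant $\mathsf{B}.p_{\mathrm{pre}}=\mathbb{O}_\mathbf{B}$: it should either be posed as an assumption on the pre-attack state or justified from the fact that honest cross-block oracle reads keep $\mathsf{B}.p$ equal to the true price while the per-block $\epsilon$-cap is slack, with any residual $\mathsf{B}.p_{\mathrm{pre}}\neq\mathbb{O}_\mathbf{B}$ gap being precisely the price-discrepancy quantity analyzed separately in Section~\ref{sec:quantify-risk}.
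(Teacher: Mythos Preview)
Your proof plan is correct and follows essentially the same approach as the paper: the paper isolates your step (2) as Lemma~\ref{lemma1} (the state updates at most once per transaction, by atomicity and the strict guard in Rule~\ref{rule-update-A}) and then derives the $\epsilon$-cap on $P_\mathbf{B}$ via Rules~\ref{rule-Bp} and~\ref{rule-maxg-0}, matching your steps (3)--(4). Your explicit tightness argument and your flagging of the pre-attack invariant $\mathsf{B}.p_{\mathrm{pre}}=\mathbb{O}_\mathbf{B}$ (which the paper leaves implicit, deferring any residual gap to the price-discrepancy analysis of Section~\ref{sec:quantify-risk}) are welcome refinements rather than a different route.
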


To prove this proposition, we first define the following lemma and consequently prove it.

\begin{lemma}\label{lemma1}
    The \SecPLF\ algorithm ensures that the state $\mathsf{A}$ updates at most once during the execution of each attack transaction $\mathcal{T}$.
    Specifically, the state $\mathsf{A}$ updates at most once for all the $n$ sub-transactions in $\mathcal{T} = (\mathcal{T}_1, \cdots, \mathcal{T}_n)$, i.e.
    $$\mathsf{A} \gets (\mathbb{B}.id, \cdot) \iff \exists! i \in [1..n]\ \forall_{i = 1}^{n} \mathcal{T}_i \in \mathcal{T}.$$
\end{lemma}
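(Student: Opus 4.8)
The plan is to combine two observations: (i) atomicity of $\mathcal{T}$ forces all of its sub-transactions $\mathcal{T}_1,\dots,\mathcal{T}_n$ to execute inside one and the same block, so every invocation of Algorithm~\ref{algo1} triggered during $\mathcal{T}$ is handed the same block $\mathbb{B}$ and hence the same index $\mathbb{B}.id$; and (ii) the update guard in Rule~\ref{rule-update-A} is the strict inequality $\mathbb{B}.id > \mathsf{A}.id$, while the update body overwrites $\mathsf{A}.id$ with exactly $\mathbb{B}.id$. Together these make a second update within a single block logically impossible.

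First I would set up notation: let $\mathbb{B}$ be the block in which $\mathcal{T}$ is included, write $b := \mathbb{B}.id$, and let $a_k$ denote the value of $\mathsf{A}.id$ immediately before sub-transaction $\mathcal{T}_k$ is processed (sub-transactions that do not call the oracle leave $\mathsf{A}$ untouched and may be ignored). By the atomic-transaction model of Section~\ref{sec: ad-model} — either all of $\mathcal{T}_1,\dots,\mathcal{T}_n$ occur or none do — the whole of $\mathcal{T}$, including every nested call into $PLF$, $DEX$, and the flash-loan provider, is mined in $\mathbb{B}$; therefore each oracle call inside $\mathcal{T}$ passes this same $\mathbb{B}$ (and thus $b$) to Algorithm~\ref{algo1}.

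Next I would argue by contradiction. Suppose $\mathsf{A}$ is updated at two distinct sub-transactions $\mathcal{T}_i$ and $\mathcal{T}_j$ with $i < j$. The update at $\mathcal{T}_i$ executes the assignment of Rule~\ref{rule-update-A}, so immediately afterwards $\mathsf{A}.id = b$; since no step of $\mathcal{T}$ ever decreases $\mathsf{A}.id$, we get $a_j = b$. But the update at $\mathcal{T}_j$ would require its guard $\mathbb{B}.id > \mathsf{A}.id$, i.e. $b > a_j = b$, a contradiction. Hence at most one $\mathcal{T}_i$ can trigger an update, which is precisely the claim (and when an update does fire, that index $i$ is the unique one, matching the $\exists!$ in the statement). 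As a corollary, $\mathsf{A}.p$ changes at most once during $\mathcal{T}$, a fact I expect to be reused when proving Proposition~\ref{prop1}.

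The only subtle point — and the one I would spell out most carefully — is observation (i): that the entire attack transaction $\mathcal{T}$ lives in a single block, so that $\mathbb{B}.id$ is invariant across all oracle invocations during $\mathcal{T}$. This is immediate from the definition of a transaction as an atomic, indivisible sequence of state transitions together with the blockchain model in which a transaction is committed in exactly one block; everything after that is a one-line consequence of the strict-inequality guard, so I do not anticipate any genuine technical difficulty.
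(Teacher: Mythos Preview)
Your proposal is correct and mirrors the paper's own proof: both hinge on the same two observations, namely that all sub-transactions of $\mathcal{T}$ share the single block index $\mathbb{B}.id$, and that the strict-inequality guard in Rule~\ref{rule-update-A} together with the assignment $\mathsf{A}.id \gets \mathbb{B}.id$ precludes any further update in that block. Your contradiction framing and explicit tracking of $a_k$ make the argument slightly more formal, but the substance is identical.
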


\begin{proof}
The price state $\mathsf{A}$ only updates if the index of the current block is greater than the index of the block where it last updated, i.e. (Rule~\ref{rule-update-A}) $$\mathsf{A} \gets (\mathbb{B}.id, \cdot) \iff \mathbb{B}.id > \mathsf{A}.id.$$
Once it updates, i.e. $\mathsf{A}.id \gets \mathbb{B}.id$, it can no longer be updated in the same transaction $\mathcal{T}$. 
To prove this, consider a transaction $\mathcal{T}$ consisting of $n$ sub-transactions, i.e $\mathcal{T} = (\mathcal{T}_1, \cdots, \mathcal{T}_n)$. 
Now trivially, all sub-transactions in the same transaction must share the same block index $\mathbb{B}.id$, i.e.
\begin{equation}\label{eq-T-B}
    \mathcal{T}_i.\mathsf{id} = \mathcal{T}_j.\mathsf{id} = \mathbb{B}.id\ \forall i,j \in [1..n],
\end{equation}
where $\mathcal{T}_i.\mathsf{id}$ denotes the index of the block stored in the sub-transaction $\mathcal{T}_i \in \mathcal{T}$.
After the first update $\mathsf{A}.id = \mathbb{B}.id$, and since $\mathsf{A}$ only updates if $\mathbb{B}.id > \mathsf{A}.id$ (from Rule~\ref{rule-update-A})
Equation~\ref{eq-T-B} and Rule~\ref{rule-update-A} implies the aforementioned rule, i.e.
\begin{therule}\label{rule-update-A-once}
$\mathsf{A} \gets (\mathbb{B}.id, \cdot) \iff \exists! i \in [1..n]\ \forall_{i = 1}^{n} \mathcal{T}_i \in \mathcal{T}.$
\end{therule}

Therefore, Rule~\ref{rule-update-A-once} ensures that $\mathsf{A}$ updates at most once during the execution of each attack transaction $\mathcal{T}$.
Hence proved.
\end{proof}

The proof of Proposition~\ref{prop1} is as follows.

\begin{proof}
Given $\mathcal{T} = (\tau_1,\cdots,\tau_k)$, Lemma~\ref{lemma1} shows that any update in $\mathsf{B}$ happens at most once during the execution of the attack transaction $\mathcal{T}$.
After the first update, the maximum value of $\mathsf{B}.p$ is restricted by a factor of $\epsilon$ from its last value (from Rule~\ref{rule-update-A}), i.e. 
\begin{therule}\label{rule-Bp}
$\mathsf{B}.p \gets \min(\mathbb{O}_\mathbf{B}, \mathsf{B}.p \cdot \epsilon) \implies \max(\mathsf{B}.p) \gets \mathsf{B}.p \cdot \epsilon.$        
\end{therule}
Rule~\ref{rule-Bp} and Lemma~\ref{lemma1} show that \SecPLF\ ensures that $\mathsf{B}.p$ cannot be distorted by more than a factor of $\epsilon$ in transaction $\mathcal{T}$. 

Since $P_\mathbf{B} \gets \min(\mathbb{O}_\mathbf{B}, \mathsf{B}.p)$ (from Rule~\ref{rule-update-P}), $\epsilon$ is equivalent to the maximum distortion factor $\max(\Theta)$ of $\mathbb{O}_\mathbf{B}$.
This rule is implied in \SecPLF\ as it uses the price $P_\mathbf{B}$ instead of the oracle price $\mathbb{O}_\mathbf{B}$ as the output of the algorithm, i.e. 
\begin{therule}\label{rule-maxg-0}
$P_\mathbf{B} \gets \min(\mathbb{O}_\mathbf{B}, \mathsf{B}.p) \implies \max(\Theta) = \epsilon.$
\end{therule}
Therefore, Rule~\ref{rule-maxg-0} ensures that the maximum achievable distortion $\max(\Theta)$ in the attack transaction $\mathcal{T}$ is equal to the safe collateralization ratio $\epsilon$, i.e. $\max(\Theta) = \epsilon$. 
Hence proved.
\end{proof}

The theorem is stated as follows.

\begin{theorem}\label{theorem1}
A standard $PLF$ employing the \SecPLF\ algorithm, using $\epsilon$ as safe collateralization ratio and $\mathbb{O}_\mathbf{A}$ as a price oracle for asset $\mathbf{A}$, is Oracle-manipulation-secure according to Definition~\ref{def: sec-plf}.
\end{theorem}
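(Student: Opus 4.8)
The plan is to obtain Theorem~\ref{theorem1} as an immediate corollary of Proposition~\ref{prop1} together with the profit bound already derived in the adversary model. Recall that by Definition~\ref{def: sec-plf} (made explicit in Rule~\ref{rule-maxTheta-maxG}), a standard $PLF$ is Oracle-manipulation-secure precisely when the adversary's maximum profit $\max(G)$ from the attack transaction $\mathcal{T}$ is zero, and this is in turn equivalent to $\max(\Theta) = \epsilon$, where $\max(\Theta)$ is the largest price-distortion factor the adversary can realize within a single transaction. So the entire proof reduces to showing that a $PLF$ running Algorithm~\ref{algo1} with safe collateralization ratio $\epsilon$ forces $\max(\Theta) = \epsilon$.

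First I would invoke Proposition~\ref{prop1}, which asserts exactly this. Its justification chains two facts already established earlier: (i) Lemma~\ref{lemma1}, that the stored state $\mathsf{A}$ is updated at most once during any attack transaction, since all sub-transactions of $\mathcal{T}$ carry the same block index and the guard $\mathbb{B}.id > \mathsf{A}.id$ necessarily fails after the first update; and (ii) the update rule $\mathsf{A}.p \gets \min(\mathbb{O}_\mathbf{A}, \mathsf{A}.p \cdot \epsilon)$ together with the output rule $P_\mathbf{A} \gets \min(\mathbb{O}_\mathbf{A}, \mathsf{A}.p)$, which caps the price actually consumed by the $PLF$ at $\epsilon$ times its pre-transaction value, no matter how far the adversary pushes $\mathbb{O}_\mathbf{A}$ on the $DEX$ in the $\mathcal{S}(\mathbf{A},\mathbf{B})$ step. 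Substituting $\max(\Theta) = \epsilon$ into Equation~\ref{eq-max-g} then yields
\[
\max(G) = \mathbb{O}_\mathbf{B} \cdot \max(Y) \cdot \left(\frac{\max(\Theta)}{\epsilon} - 1\right) = \mathbb{O}_\mathbf{B} \cdot \max(Y) \cdot \left(\frac{\epsilon}{\epsilon} - 1\right) = 0,
\]
so the defining condition of Definition~\ref{def: sec-plf} is met and the theorem follows. As a concrete sanity check I would re-run the worked example of Figure~\ref{fig:attack} under \SecPLF: the swap that previously inflated $\mathbb{O}_\mathbf{B}$ by a factor $\Theta \approx 10{,}202$ now contributes a consumed price of at most $\$10 \cdot \epsilon$, the borrowing limit collapses to the honest collateral value, and the profit $G$ drops to $0$.

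The main obstacle is not the algebra, which is a single line, but ensuring the reduction is genuinely airtight — in particular that clamping the reported price to $P_\mathbf{A} = \min(\mathbb{O}_\mathbf{A}, \mathsf{A}.p) \le \mathbb{O}_\mathbf{A}$ does not itself break the collateralization-safety invariant (Rule~\ref{rule:e-safe}) that the standard $PLF$ model is assumed to maintain. Here I would argue that \SecPLF\ only ever reports a price \emph{no larger} than the oracle price, so it can only \emph{understate} collateral value; since the standard $PLF$ already liquidates whenever a (possibly understated) collateral value falls below $\epsilon \cdot l$, the under-collateralization guarantee of Definition~\ref{def:e-safe} is preserved intact, and the sole residual effect is the price-discrepancy arbitrage window later quantified by $\Delta^\mathbb{B}_\mathbf{A}$. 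Flagging this keeps the theorem's ``standard $PLF$'' hypothesis honest rather than silently weakened.
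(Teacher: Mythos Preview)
Your proposal is correct and follows essentially the same route as the paper: invoke Proposition~\ref{prop1} to obtain $\max(\Theta)=\epsilon$, then apply Rule~\ref{rule-maxTheta-maxG} from Definition~\ref{def: sec-plf} to conclude $\max(G)=0$. The paper's proof is the same one-line reduction (with the added remark that zero profit means the flash loan cannot be repaid, so $\mathcal{T}$ fails); your extra paragraph on preserving Rule~\ref{rule:e-safe} under the downward price clamp is a nice sanity check the paper does not make explicit.
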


\begin{proof}
Proposition~\ref{prop1} states that in \SecPLF, $\max(\Theta) = \epsilon$. 
This rule implies that $\mathbb{A}$'s maximum profit $\max(G)$ from the transaction $\mathcal{T}$ is $0$, i.e. (Rule~\ref{rule-maxTheta-maxG} from Definition~\ref{def: sec-plf})  $$\max(\Theta) = \epsilon \iff \max(G) = 0.$$

Since the adversary cannot gain any profits, she would be unable to pay back the flash loan, which would result in an unsuccessful transaction.
Hence proved.
\end{proof}

\subsection{Example of Attack Prevention}

Here we show that \SecPLF\ algorithm prevents the attack transaction $\mathcal{T}$ demonstrated in the oracle manipulation attack example in Figure~\ref{fig:attack}.
Notice that \SecPLF\ algorithm takes affect in the transaction step $\mathcal{B}(\mathbf{cB}, \mathbf{C})$.
Specifically, when $PLF$ fetches the distorted price oracle $\mathbb{O}_\mathbf{B}= \$10 \cdot \Theta$ (highlighted in red).
In this example, the distortion factor $\Theta = 10,202$. Here, $PLF$ fetches the distorted price oracle $\mathbb{O}_\mathbf{B}$ as its input.
The output of the \SecPLF\ algorithm, that is $P_\mathbf{B}$, denotes the safe price of asset $\mathbf{B}$ that $PLF$ must use to prevent the transaction $\mathcal{T}$. 

For the attack transaction $\mathcal{T} = (\tau_1, \cdots, \tau_k)$ in Figure~\ref{fig:attack}, the maximum value of $P_\mathbf{B}$ is restricted by the upper bound $\mathsf{B}.p \cdot \epsilon$ (from Rule~\ref{rule-update-A} and~\ref{rule-update-P} in the \SecPLF\ algorithm). 
Because the distortion created by $\mathbb{A}$, that is $\Theta$, is more than the maximum allowed distortion in $\SecPLF$, that is $\epsilon$, $PLF$ uses the output price as $\mathsf{B}.p \cdot \epsilon$  i.e. $$\Theta > \epsilon \implies P_\mathbf{B} \gets \mathsf{B}.p \cdot \epsilon,$$ where $\Theta = 10,202$ and $\epsilon = 5$.

The constraint $\max(\Theta) = \epsilon$ in Rule~\ref{rule-maxTheta-maxG}, and the equivalence of $\max(\Theta)$ and $\epsilon$ in Rule~\ref{rule-maxg-0} implies that $\mathbb{A}$ is unable to gain a profit, i.e. $$\max(\Theta) = \epsilon \iff \max(G) = 0.$$ 
Therefore, \SecPLF\ prevents the transaction $\mathcal{T}$ as $\mathbb{A}$ would fail to execute the final transaction-step $\tau_k = \mathcal{P}(10,000, \mathbf{A})$ to pay back the flash loan.

\subsection{Quantifying the Arbitrage Risk of \SecPLF}\label{sec:quantify-risk}

The primary reason $PLF$s use price oracles is to mitigate the risk of potential arbitrage attacks caused due to price shocks in the market. 
Although Theorem~\ref{theorem1} proves that \SecPLF\ is an Oracle-manipulation-secure PLF according to Definition~\ref{def: sec-plf}, it may allow for rare arbitrage opportunities in case of price shocks. 
The constraints imposed on the oracle price $\mathbb{O}_\mathbf{A}$ may introduce price differences in the output price $P_\mathbf{A}$, creating arbitrage opportunities. 

To quantify this arbitrage risk, we first define the \emph{Price-discrepancy case} in \SecPLF\ as the only case in which arbitrage risk is possible.
Specifically, there is an arbitrage opportunity in \SecPLF\ if and only if the algorithm receives an input price oracle $\mathbb{O}_\mathbf{A}$ such that $\mathbb{O}_\mathbf{A} > \mathsf{A}.p$ in Rule~\ref{rule-update-P}.
We refer to this case as the Price-discrepancy case, i.e. 
\begin{case}[Price-discrepancy]\label{case-z}
$\mathbb{O}_\mathbf{A} > \mathsf{A}.p.$
\end{case}

To quantify the risk in the Price-discrepancy case, given the inputs $\mathbb{O}_\mathbf{A}$, $\mathbb{B}$, and $\epsilon$,
let $\Delta^\mathbb{B}_\mathbf{A}$ denote the price difference between the input oracle price $\mathbb{O}_\mathbf{A}$ and the output price $P_\mathbf{A}$ in block $\mathbb{B}$. 
Since $\mathbb{O}_\mathbf{A} > \mathsf{A}.p$, the output price $P_\mathbf{A}$ gets the value $\mathsf{A}.p$ (from Rule~\ref{rule-update-P}).
Therefore, $\Delta^\mathbb{B}_\mathbf{A}$ calculates as \begin{equation}\label{eq-delta-B}
\Delta^\mathbb{B}_\mathbf{A} = \mathbb{O}_\mathbf{A} - \mathsf{A}.p.
\end{equation}
Here, $\Delta^\mathbb{B}_\mathbf{A}$ denotes the price difference created in block $\mathbb{B}$ due to the constraints imposed on $\mathbb{O}_\mathbf{A}$.
Therefore, the parameter $\Delta^\mathbb{B}_\mathbf{A}$ quantifies the potential arbitrage risk in the Price-discrepancy case. 

Further, to analyze the arbitrage risk using $\Delta^\mathbb{B}_\mathbf{A}$, we must use the maximum possible value of this parameter in each block $\mathbb{B}$, which calculates as 
\begin{equation}\label{eq-max-delta}
    \max(\Delta^\mathbb{B}_\mathbf{A}) = \max(\mathbb{O}_\mathbf{A} - \mathsf{A}.p).
\end{equation} 

Next, in Section~\ref{sec: analysis}, we use this parameter in our empirical analysis to show that the Price-discrepancy case (Case~\ref{case-z}) does not occur in the real-world setting with high confidence of $z = 1 - 10^{-5}$. We refer to this case as the Confidence case, i.e.
\begin{case}[Confidence]\label{case-confidence}
$\mathbb{P}(\max(\Delta^\mathbb{B}_\mathbf{A}) \le 0) \ge z.$
\end{case}

In frequent statistics, confidence value ($z$) refers to the probability that a population parameter ($\max(\Delta^\mathbb{B}_\mathbf{A})$) will fall within a specific range a certain proportion of times. 
We calculate this probability value using the Cumulative Density Function (CDF) of $\max(\Delta^\mathbb{B}_\mathbf{A})$. 

\section{\SecPLF: Analysis}\label{sec: analysis}

In this section, we present the practical analysis of \SecPLF. 
We illustrate that the arbitrage opportunities generated due to the potential price differences caused by the \SecPLF\ algorithm do not occur with high confidence.
We do this by providing an empirical analysis of the market data collected over the past three years.
Next, we analyze the practicality of the \SecPLF\ algorithm to justify our contributions.
Finally, we provide a comparative study among the existing solutions introduced in Section~\ref{sec: related} and \SecPLF.

\subsection{Price-discrepancy Case Risk Analysis}\label{sub-sec: risk-analysis}

\begin{table}[t]
    \centering
    \caption{Price-discrepancy case (Case~\ref{case-z}) analysis notations} \label{tab: arbitrage-risk-notations}
    \begin{tabular}{c|p{6.5cm}}
        \hline
        \textbf{Notation} & \textbf{Meaning} \\
        \hline
        $D_\mathbf{A}$ & $D_\mathbf{A} = (d_1, \cdots, d_{N})$; $N$ minutes of market price data of $\mathbf{A}$ since September 1, 2020, where $N \approx 1.57 \cdot 10^6$\\ 
        $z$ & Confidence value $z = 1 - 10^{-5}$; the probability of the non-occurrence of Price-discrepancy case\\
        $T$ & Maximum number of minutes after which $PLF$ receives $\mathbb{O}_\mathbf{A}$ as input for each asset $\mathbf{A}$ \\
        $T_z$ & Maximum number of minutes $T$ such that the Confidence case (Case~\ref{case-confidence}) is satisfied\\
        $\max(\Delta^M_\mathbf{A})$ & $\mathbf{A}$'s maximum possible price difference at the $M^{th}$ minute in the Price-discrepancy case\\
        \hline
    \end{tabular}
\end{table}

Table~\ref{tab: arbitrage-risk-notations} summarizes the notations used in the Price-discrepancy case risk analysis. 
To demonstrate that the arbitrage risk due to price shocks is minimal in the real-world setting, we aim to illustrate that based on the past three years of market data, the Price-discrepancy case does not occur with a high confidence of $z$, i.e. the Confidence case is satisfied.
Since we set $z = 1 - 10^{-5}$, it ensures that the probability of non-occurrence of the Price-discrepancy case is $z$, which is sufficiently high for any practical purposes.

\subsubsection{Data collection}
We collect the market price data of 15 crypto-assets from the Crpyto-compare API \cite{crypto-compare}.
We choose a market capitalization value ranging from \$90 Million (\textbf{\textit{PERP}}) to \$750 Billion (\textbf{\textit{BTC}}). 
As of September 2023, this range accounts for over $99\%$ of the total cryptocurrency market capitalization value. 
Specifically, we store this data as $D_\mathbf{A}$, a per-minute time-series of $N$ data points for each asset $\mathbf{A}$ i.e. $D_\mathbf{A} = (d_1, \cdots, d_{N})$. Here, $d_i \in D_\mathbf{A}$ stores the close-price data of $\mathbf{A}$ in the $i^{th}$ minute starting from September 1st, 2020.
We collect this data from September 1st, 2020, to September 1st, 2023, which amounts to approximately 1.57 million data points for each crypto-asset $\mathbf{A}$, or a total of approximately 23.55 million data points i.e. $N \approx 1.57 \cdot 10^6$.

To analyze the arbitrage risk, we must analyze the value of the parameter $\max(\Delta^\mathbb{B}_\mathbf{A})$ (from Equation~\ref{eq-max-delta}) for each data set $D_\mathbf{A}$.
To calculate the value of $\max(\Delta^\mathbb{B}_\mathbf{A})$, where $\mathbb{B}.id = M$, we can replace $\mathbb{O}_\mathbf{A}$ with the market price of $\mathbf{A}$ in that block. 
However, since we are unable to gain access to the market price data for each block, we calculate the value of $\max(\Delta^\mathbb{B}_\mathbf{A})$ for each minute as $\max(\Delta^M_\mathbf{A})$ instead of each block.
Thus, the value of $\max(\Delta^M_\mathbf{A})$ at the $M^{th}$ minute from September 1st 2020, calculates from Equation~\ref{eq-max-delta} as 
\begin{equation}\label{eq-delta-minute}    
\max(\Delta^M_\mathbf{A}) = \max(d_M - \mathsf{A}.p) = d_M - \min(\mathsf{A}.p),
\end{equation}
where $d_M \in D_\mathbf{A}$ and $M \in [1,\cdots,N]$.

\subsubsection{Assumption} 

To calculate the value of $\max(\Delta^M_\mathbf{A})$ in Equation~\ref{eq-delta-minute} for $N$ data points, we assume that $PLF$ updates the state $\mathsf{A}$ at least once every $T \in \mathbb{Z}^*$ minutes for each asset $\mathbf{A}$.
Using this assumption, we calculate the value of $\max(\Delta^M_\mathbf{A})$ (from Rule~\ref{rule-update-A}) at the $M^{th}$ minute over the last $T$ minute(s) as 
\begin{equation}\label{eq-max-delta-T}
\max_T(\Delta^M_\mathbf{A}) = d_M - \min_i(d_{i}) \cdot \epsilon \ \forall i \in [M-T,\cdots,M].
\end{equation}

\begin{figure}[t]
    \centering
    \includegraphics[width=0.47\textwidth]{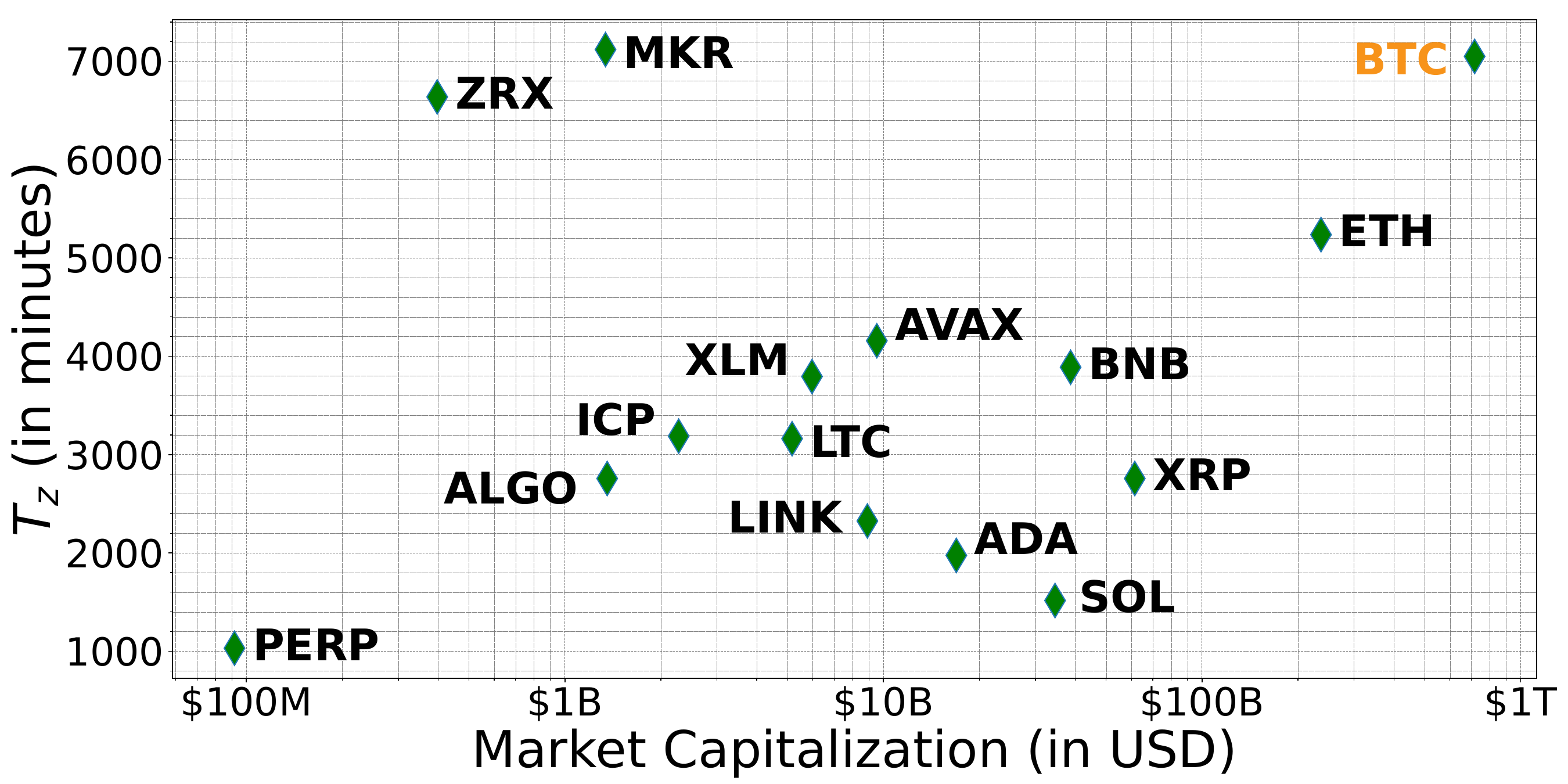}
    \caption{$T_z$ (in minutes) vs Market Capitalization (in USD) for 15 crypto-assets. The confidence value $z = 1 - 10^{-5}$. The safe collateralization ratio $\epsilon = 1.25$.}
    \label{fig: T_MAX-vs-coins}
\end{figure}

\subsubsection{Empirical analysis}

Moving forward, we aim to show that based on this assumption and three years of market data, the Confidence case (Case~\ref{case-confidence}) is satisfied for a reasonable value of $T$ for each dataset $D_\mathbf{A}$. 
Hence, we replace $\max(\Delta^\mathbb{B}_\mathbf{A})$ with $\max_T(\Delta^M_\mathbf{A})$ from Equation~\ref{eq-max-delta-T} to get Case~\ref{case-confidence} as $$\mathbb{P}(\max_T(\Delta^M_\mathbf{A}) \le 0) \ge z.$$

The aforementioned assumption implies that the lower the value of $T$, the higher the resources needed by $PLF$ to ensure security.
Hence, to analyze this parameter, we calculate the maximum value of $T$ such that the Confidence case (Case~\ref{case-confidence}) is satisfied as $T_z$. Here $T_z$ calculates as
\begin{equation}\label{eq-t-max}
T_z = \max(T)\ \text{if}\ \mathbb{P}(\max_T(\Delta^M_\mathbf{A}) \le 0) \ge z.
\end{equation}
Intuitively, it implies that assuming a $PLF$ uses $\mathbb{O}_\mathbf{A}$ as input at least once every $T_z$ minute(s), the Price-discrepancy case does not occur with a high probability of $z$, i.e. the Confidence case is satisfied.

Figure~\ref{fig: T_MAX-vs-coins} shows a plot of $T_z$ (Y-axis) vs Market Capitalization (X-axis) for each asset $\mathbf{A}$ (from Equation~\ref{eq-t-max}), where $\epsilon = 1.25 \And z = 1 - 10^{-5}$.
Here, we take $\epsilon = 1.25$ as it is a typical lower bound used for the safe collateralization ratio by a standard $PLF$ \cite{aave-docs}. 
This value of $\epsilon$ represents the worst-case scenario for $PLF$ as the lower the value of $\epsilon$, the higher is the value of $\max(\Delta^M_\mathbf{A})$, implying more arbitrage risk.
The figure illustrates that for each asset $\mathbf{A}$, the maximum value of $T$ such that the Confidence case (Case~\ref{case-confidence}) is satisfied is well over $1,000$ minutes, which is approximately 16 hours i.e. $T_z > 1,000$.
In the next sub-section, we show that this maximum value of $T$, which is at least $1,000$ minutes ($T_z > 1,000$), is more than the practical upper bound of $T$ based on factual evidence.

\subsection{Practicality Analysis}\label{sec: practical}

In this sub-section, we justify the practicality of \SecPLF\ by justifying its \textit{(i)} adaptability in case of price shocks in the market; \textit{(ii)} re-configurability in terms of $\epsilon$ and $z$; and \textit{(iii)} scalability in terms of algorithm complexities, and ease-of-implementation and integration with a standard $PLF$.

\subsubsection{Adaptability of \SecPLF\ in case of price shocks}

\begin{figure}[!t]
    \centering
    \includegraphics[width=0.48\textwidth]{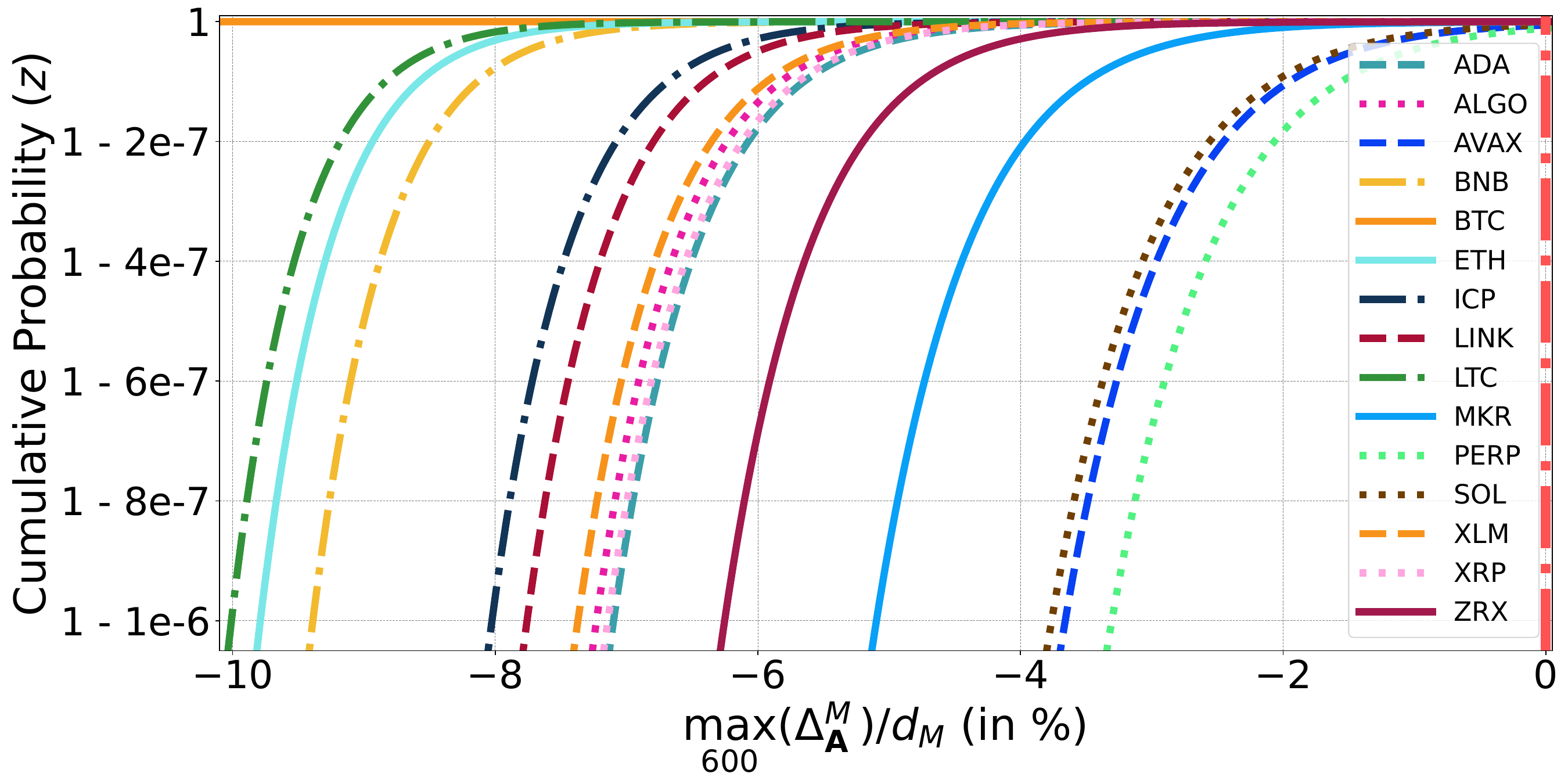}
    \caption{Cumulative Probability of $\max_{600}(\Delta^M_\mathbf{A})/d_M$ for each data set $D_\mathbf{A} = (d_1, \cdots, d_{N})$, where $N \approx 1.57 \cdot 10^6$. The safe collateralization ratio $\epsilon = 1.25$.}
    \label{fig: CPvsDelta600}
\end{figure}

To justify the adaptability of \SecPLF, we aim to justify our assumption, that a $PLF$ uses an oracle price as an input for each asset $\mathbf{A}$ at least once every $T$ minute(s).
Since $T_z$ is the maximum value of $T$ in \SecPLF\ such that the Confidence case is satisfied (i.e. Equation~\ref{eq-t-max}), the value of $T$ in our assumption must be less than or equal to $T_z$, i.e. $T \le T_z$.

Observe that for each asset in Figure~\ref{fig: T_MAX-vs-coins}, the value of $T_z$ is over $1,000$ minutes, i.e. $T_z > 1,000$. 
Therefore, to justify our assumption, we must show that $T \le 1,000$ in the real-world setting.
Since $PLF$s use these price oracles to monitor safe collateralization \cite{aave-docs, compound-docs}, a standard $PLF$ must use the oracle price of each asset at least once every 10 hours. Thus, implying that $T \le 600$ (in minutes), which justifies our assumption.

Further, figure~\ref{fig: CPvsDelta600} shows a plot of the Cumulative Density Function (CDF) of $\max_{600}(\Delta^M_\mathbf{A}) / d_M$ for each data set $D_\mathbf{A} = (d_1, \cdots, d_{N})$, where $N \approx 1.57 \cdot 10^6$. It illustrates that $\max_{t}(\Delta^M_\mathbf{A}) \le 0 \ \forall t \in [1,\cdots,600]$ with a high probability of at least $1 - 2 \cdot 10^{-7}$ for each dataset $D_\mathbf{A}$.
This probability is approximately two factors more than the $z$ value in the Confidence case (Case~\ref{case-confidence}), i.e. $$\mathbb{P}(\max_{t}(\Delta^M_\mathbf{A}) \le 0) \ge 1 - 2 \cdot 10^{-7} \gg z \ \forall t \in [1,\cdots,600]\  \forall D_\mathbf{A},$$ where $z = 1 - 10^{-5}$ and $\epsilon = 1.25$.

Hence, the Confidence case is justified with a reasonable margin even in the worst-case scenario for $PLF$.
This justification shows that, along with being a practical solution to tackle oracle manipulation attacks, \SecPLF\ is adaptable to the arbitrage risk that may arise in the Price-discrepancy case due to price shocks.

\subsubsection{Re-configurability of \SecPLF}

The re-configurability of SecPLF is established through the incorporation of two crucial parameters, i.e. $\epsilon$ and $z$. 
These two parameters are designed to quantify distinct dimensions of risk for a standard $PLF$ (Section~\ref{sec: PLF}). 

The safe collateralization ratio $\epsilon$ serves as a metric to quantify under-collateralization risk, providing a comprehensive assessment of the protocol's exposure to potential financial vulnerabilities arising from under-collateralization. 
On the other hand, the confidence value $z$ quantifies the probability of arbitrage risk, addressing the susceptibility of the protocol to arbitrage opportunities that may arise due to price shocks in the market. 

Integrating these two parameters in \SecPLF, provides a versatile and flexible solution for $PLF$s, enabling them to dynamically configure and fine-tune these risk parameters in response to evolving market conditions. 
The re-configurable nature of \SecPLF\ thus positions it as a robust and responsive solution capable of mitigating the aforementioned two risk factors, enhancing the overall security and stability of the DeFi landscape.

\subsubsection{Scalability of \SecPLF}

We justify \SecPLF\ as a scalable solution for standard $PLF$s (as modeled in Section~\ref{sec: PLF}), underpinned by careful consideration of key implementation details.
The algorithmic complexities of \SecPLF\ exhibit linear storage complexity, contingent upon the number of assets on $PLF$. 
Further, the \SecPLF\ algorithm runs in a constant time complexity.
This design ensures that the protocol can efficiently accommodate a growing number of assets without sacrificing computational efficiency. 

Similar to the TWAP oracles in \cite{mackinga2022twap} and the Timelock mechanisms in \cite{ezzat2022timelock}, \SecPLF\ algorithm seamlessly integrates as an independent layer solution in-between the oracle receive function and the oracle usage function of a standard $PLF$ as modeled in Section~\ref{sec: PLF}. 
This integration method minimizes disruption to existing $PLF$s, allowing for a straightforward adoption process. 

Moreover, the overhead cost of implementing \SecPLF\ can be evaluated with a focus on the storage of additional $n$ states in the smart contract for $n$ assets on $PLF$.
It is noteworthy that this overhead cost is minimal relative to the existing operational costs of $PLF$s \cite{link-analysis, SANKA2021103232}.
Thus, reinforcing the scalability of \SecPLF\ as an economically viable and operationally efficient solution for $PLF$s to tackle flash loan-driven oracle manipulation attacks.

\subsection{Comparative Analysis}

In this sub-section, we provide a comparative analysis of \SecPLF\ among the existing solutions introduced in Section~\ref{sec: related}.

\subsubsection{TWAP oracles and timelock mechanisms}
In contrast to existing solutions leveraging TWAP oracles (\cite{adams2022uniswap}) and timelock mechanisms within PLFs \cite{ezzat2022timelock}, \SecPLF\ introduces a novel solution that addresses the vulnerabilities associated with these approaches. 
The susceptibility of TWAP oracles to oracle manipulation attacks, as demonstrated in \cite{mackinga2022twap}, underscores their inherent security risks.
Additionally, Timelock mechanisms remain susceptible to arbitrage attacks in case of price shocks as they introduce a delay in oracle price usage \cite{ezzat2022timelock}.

\SecPLF\ strategically mitigates these vulnerabilities by providing robust security against oracle manipulation attacks (Theorem~\ref{theorem1}), and adaptability to price shocks in the market (Case~\ref{case-confidence}). 
The re-configurable nature of \SecPLF, quantified through parameters $\epsilon$ and $z$, positions it as a superior alternative capable of handling dynamic market conditions with heightened security.
Moreover, \SecPLF\ can be easily integrated with any standard $PLF$ as modeled in Section~\ref{sec: PLF}, further justifying its applicability.

\subsubsection{Staking and reputation systems}

Furthering the comparative analysis, \SecPLF\ betters itself from existing solutions relying on staking and reputation systems.
While staking and reputation systems are vulnerable to Sybil attacks, which introduces the risk of manipulation as evidenced in \cite{pauwels2022zkkyc}; \SecPLF\ stands resilient against oracle manipulation attacks, as affirmed by the \SecPLF\ Theorem.
This fundamental security feature positions \SecPLF\ as a trustworthy solution that does not compromise integrity even in the face of adversaries attempting to subvert the system.

\subsubsection{Multiple oracle sources}

In fortifying PLFs against oracle manipulation attacks, \SecPLF\ and the approach of using multiple oracle sources represent two distinctive strategies. 
\SecPLF\ relies on a formal security theorem, offering a theoretical foundation for its resilience to flash loan-driven oracle manipulation attacks. 
Conversely, the usage of multiple oracle sources adopts a pragmatic approach, which introduces weighted averages of price data from multiple oracles. 
This strategy seeks to thwart manipulation by requiring consensus or majority agreement among participating oracles. 
However, reliance on weighted averages of multiple oracle sources may be vulnerable to Sybil attacks, hence undermining its practicality. 
While \SecPLF\ emphasizes theoretical security, the selection between these approaches hinges on the specific needs and risk preferences of a PLF implementation.

Moreover, \SecPLF\ leverages the strengths of the solutions based on multiple oracle sources.
As evidenced in \cite{synthetix, hackmd-oracle}, multiple oracle usage improves the reliability and redundancy of $PLF$, enhancing overall security. 
\SecPLF\ prioritizes the oracle-agnostic property, that is the usage of an independent oracle price as an input to generate a safe-to-use price as an output. 
This oracle-agnostic property allows \SecPLF\ to leverage the strategy of multiple oracle sources, further justifying its practicality.

\subsubsection{Circuit breakers}

In the landscape of risk management, \SecPLF\ and circuit breakers in traditional finance \cite{qin2021cefi, Santoni1993CircuitBA, li2021impacts, Lauterbach1993StockMC} employ distinct strategies to tackle market volatility. 
Traditional circuit breakers, prevalent in established financial markets, temporarily halt trading during extreme price fluctuations to mitigate panic-driven behavior.
In contrast, \SecPLF\ is a re-configurable and proactive safe-guarding solution in decentralized finance (DeFi) offering zero downtime.
Thus, along with being resistant to oracle manipulation attacks, \SecPLF\ maintains the availability of $PLF$ in case of price shocks, unlike circuit breakers.

Further, \SecPLF's decentralized nature differs markedly from the centralized control of traditional circuit breakers, aligning with DeFi's core principles.
Its reconfigurability allows dynamic adjustments to primary risk parameters $z$ and $\epsilon$. 
Thus, \SecPLF\ allows $PLF$ to monitor itself based on the desired level of arbitrage ($z$) and under-collateralization ($\epsilon$) risks, a unique feature absent in circuit breakers in centralized finance.

\section{Conclusion}\label{sec:concl}

This paper set out to tackle oracle manipulation attacks on PLFs, which have been exacerbated with the advent of flash loans in the recent DeFi landscape. 
Through an in-depth analysis of the attack mechanism, formalizing both operational and adversary models for PLFs, we identify specific vulnerabilities that could be exploited.
This understanding led to the development of \SecPLF, a robust and efficient solution specifically tailored to counteract these attacks.

At the core of \SecPLF\ is the concept of tracking a price state for each crypto-asset and introducing specific price constraints to output a safe-to-use price, effectively rendering the oracle manipulation attack impossible to execute. 
This methodology ensures that a PLF only engages a price oracle if the last recorded price is within a defined threshold, effectively making potential attacks unprofitable.
The attributes of \SecPLF\ were discussed, emphasizing its proactive protection against attacks, adaptability in case of price shocks, re-configurability, and scalability.
Further, \SecPLF\ provides a holistic and secure approach to handling oracle price data, surpassing the limitations of existing solutions.

In conclusion, \SecPLF\ provides a robust, adaptable, and cost-effective solution against flash loan-driven oracle manipulation attacks. 
Addressing one of the significant vulnerabilities in the DeFi landscape helps enhance the security and reliability of DeFi platforms, thereby contributing to the further growth and success of decentralized finance.

\bibliographystyle{ACM-Reference-Format}
\bibliography{references}


\begin{thebibliography}{46}


\ifx \showCODEN    \undefined \def \showCODEN     #1{\unskip}     \fi
\ifx \showDOI      \undefined \def \showDOI       #1{#1}\fi
\ifx \showISBNx    \undefined \def \showISBNx     #1{\unskip}     \fi
\ifx \showISBNxiii \undefined \def \showISBNxiii  #1{\unskip}     \fi
\ifx \showISSN     \undefined \def \showISSN      #1{\unskip}     \fi
\ifx \showLCCN     \undefined \def \showLCCN      #1{\unskip}     \fi
\ifx \shownote     \undefined \def \shownote      #1{#1}          \fi
\ifx \showarticletitle \undefined \def \showarticletitle #1{#1}   \fi
\ifx \showURL      \undefined \def \showURL       {\relax}        \fi
\providecommand\bibfield[2]{#2}
\providecommand\bibinfo[2]{#2}
\providecommand\natexlab[1]{#1}
\providecommand\showeprint[2][]{arXiv:#2}

\bibitem[aav(2023)]%
        {aave-docs}
 \bibinfo{year}{2023}\natexlab{}.
\newblock \bibinfo{title}{AAVE V3}.
\newblock
\newblock
\urldef\tempurl%
\url{https://docs.aave.com/risk}
\showURL{%
\tempurl}


\bibitem[bal(2023)]%
        {balancer}
 \bibinfo{year}{2023}\natexlab{}.
\newblock \bibinfo{title}{Balancer}.
\newblock
\newblock
\urldef\tempurl%
\url{https://balancer.fi/}
\showURL{%
\tempurl}


\bibitem[ban(2023)]%
        {bancor}
 \bibinfo{year}{2023}\natexlab{}.
\newblock \bibinfo{title}{Bancor Network}.
\newblock
\newblock
\urldef\tempurl%
\url{https://bancor.network/}
\showURL{%
\tempurl}


\bibitem[lin(2023)]%
        {link-analysis}
 \bibinfo{year}{2023}\natexlab{}.
\newblock \bibinfo{title}{Chainlink Education Hub}.
\newblock
\newblock
\urldef\tempurl%
\url{https://chain.link/education-hub/blockchain-scalability}
\showURL{%
\tempurl}


\bibitem[com(2023)]%
        {compound-docs}
 \bibinfo{year}{2023}\natexlab{}.
\newblock \bibinfo{title}{Compound III}.
\newblock
\newblock
\urldef\tempurl%
\url{https://docs.compound.finance/liquidation/}
\showURL{%
\tempurl}


\bibitem[cry(2023)]%
        {crypto-compare}
 \bibinfo{year}{2023}\natexlab{}.
\newblock \bibinfo{title}{Crpyto Compare API}.
\newblock
\newblock
\urldef\tempurl%
\url{https://min-api.cryptocompare.com}
\showURL{%
\tempurl}


\bibitem[cur(2023)]%
        {curve}
 \bibinfo{year}{2023}\natexlab{}.
\newblock \bibinfo{title}{Curve}.
\newblock
\newblock
\urldef\tempurl%
\url{https://curve.fi/}
\showURL{%
\tempurl}


\bibitem[hac(2023)]%
        {hackmd-oracle}
 \bibinfo{year}{2023}\natexlab{}.
\newblock \bibinfo{title}{HackMD}.
\newblock
\newblock
\urldef\tempurl%
\url{https://hackmd.io/@ZThiFotAQFOXxTphWq69uQ/S17-GbR-3}
\showURL{%
\tempurl}


\bibitem[ray(2023)]%
        {raydium}
 \bibinfo{year}{2023}\natexlab{}.
\newblock \bibinfo{title}{Raydium}.
\newblock
\newblock
\urldef\tempurl%
\url{https://raydium.io/}
\showURL{%
\tempurl}


\bibitem[rek(2023)]%
        {rekt}
 \bibinfo{year}{2023}\natexlab{}.
\newblock \bibinfo{title}{Rekt News}.
\newblock
\newblock
\urldef\tempurl%
\url{https://rekt.news/leaderboard/}
\showURL{%
\tempurl}


\bibitem[syn(2023)]%
        {synthetix}
 \bibinfo{year}{2023}\natexlab{}.
\newblock \bibinfo{title}{Synthetix}.
\newblock
\newblock
\urldef\tempurl%
\url{https://synthetix.io/}
\showURL{%
\tempurl}


\bibitem[ter(2023)]%
        {terraswap}
 \bibinfo{year}{2023}\natexlab{}.
\newblock \bibinfo{title}{TerraSwap}.
\newblock
\newblock
\urldef\tempurl%
\url{https://terraswap.io/}
\showURL{%
\tempurl}


\bibitem[uni(2023)]%
        {uniswap}
 \bibinfo{year}{2023}\natexlab{}.
\newblock \bibinfo{title}{Uniswap}.
\newblock
\newblock
\urldef\tempurl%
\url{https://uniswap.org/}
\showURL{%
\tempurl}


\bibitem[Adams et~al\mbox{.}(2022)]%
        {adams2022uniswap}
\bibfield{author}{\bibinfo{person}{Austin Adams}, \bibinfo{person}{Xin Wan}, {and} \bibinfo{person}{Noah Zinsmeister}.} \bibinfo{year}{2022}\natexlab{}.
\newblock \showarticletitle{Uniswap v3 TWAP Oracles in Proof of Stake}.
\newblock \bibinfo{journal}{\emph{Available at SSRN 4384409}} (\bibinfo{year}{2022}).
\newblock


\bibitem[Angeris and Chitra(2020)]%
        {angeris2020improved}
\bibfield{author}{\bibinfo{person}{Guillermo Angeris} {and} \bibinfo{person}{Tarun Chitra}.} \bibinfo{year}{2020}\natexlab{}.
\newblock \showarticletitle{Improved price oracles: Constant function market makers}. In \bibinfo{booktitle}{\emph{Proceedings of the 2nd ACM Conference on Advances in Financial Technologies}}. \bibinfo{pages}{80--91}.
\newblock


\bibitem[Bartoletti et~al\mbox{.}(2021)]%
        {bartoletti2021sok}
\bibfield{author}{\bibinfo{person}{Massimo Bartoletti}, \bibinfo{person}{James Hsin-yu Chiang}, {and} \bibinfo{person}{Alberto~Lluch Lafuente}.} \bibinfo{year}{2021}\natexlab{}.
\newblock \showarticletitle{SoK: lending pools in decentralized finance}. In \bibinfo{booktitle}{\emph{Financial Cryptography and Data Security. FC 2021}}. Springer, \bibinfo{pages}{553--578}.
\newblock


\bibitem[Caldarelli(2020)]%
        {caldarelli2020understanding}
\bibfield{author}{\bibinfo{person}{Giulio Caldarelli}.} \bibinfo{year}{2020}\natexlab{}.
\newblock \showarticletitle{Understanding the blockchain oracle problem: A call for action}.
\newblock \bibinfo{journal}{\emph{Information}} \bibinfo{volume}{11}, \bibinfo{number}{11} (\bibinfo{year}{2020}), \bibinfo{pages}{509}.
\newblock


\bibitem[Caldarelli and Ellul(2021a)]%
        {caldarelli2021blockchain}
\bibfield{author}{\bibinfo{person}{Giulio Caldarelli} {and} \bibinfo{person}{Joshua Ellul}.} \bibinfo{year}{2021}\natexlab{a}.
\newblock \showarticletitle{The blockchain oracle problem in decentralized finance—a multivocal approach}.
\newblock \bibinfo{journal}{\emph{Applied Sciences}} \bibinfo{volume}{11}, \bibinfo{number}{16} (\bibinfo{year}{2021}), \bibinfo{pages}{7572}.
\newblock


\bibitem[Caldarelli and Ellul(2021b)]%
        {oracle-problem}
\bibfield{author}{\bibinfo{person}{Giulio Caldarelli} {and} \bibinfo{person}{Joshua Ellul}.} \bibinfo{year}{2021}\natexlab{b}.
\newblock \showarticletitle{The Blockchain Oracle Problem in Decentralized Finance—A Multivocal Approach}.
\newblock \bibinfo{journal}{\emph{Applied Sciences}} \bibinfo{volume}{11}, \bibinfo{number}{16} (\bibinfo{year}{2021}).
\newblock
\urldef\tempurl%
\url{https://doi.org/10.3390/app11167572}
\showDOI{\tempurl}


\bibitem[Cao et~al\mbox{.}(2021)]%
        {flashot}
\bibfield{author}{\bibinfo{person}{Yixin Cao}, \bibinfo{person}{Chuanwei Zou}, {and} \bibinfo{person}{Xianfeng Cheng}.} \bibinfo{year}{2021}\natexlab{}.
\newblock \showarticletitle{Flashot: a snapshot of flash loan attack on DeFi ecosystem}.
\newblock \bibinfo{journal}{\emph{arXiv preprint arXiv:2102.00626}} (\bibinfo{year}{2021}).
\newblock
\urldef\tempurl%
\url{https://arxiv.org/abs/2102.00626}
\showURL{%
\tempurl}


\bibitem[Chandler et~al\mbox{.}(2022)]%
        {chandler2022defi}
\bibfield{author}{\bibinfo{person}{Brad Chandler}, \bibinfo{person}{Patrick Stiles}, {and} \bibinfo{person}{Jared Blinken}.} \bibinfo{year}{2022}\natexlab{}.
\newblock \showarticletitle{DeFi Flash Loans: What'Atomicity'Makes Possible-Why Does This Innovation Not Already Exist in Traditional Finance?}
\newblock \bibinfo{journal}{\emph{Available at SSRN 4116909}} (\bibinfo{year}{2022}).
\newblock


\bibitem[Eskandari et~al\mbox{.}(2021)]%
        {sok-oracle}
\bibfield{author}{\bibinfo{person}{Shayan Eskandari}, \bibinfo{person}{Mehdi Salehi}, \bibinfo{person}{Wanyun~Catherine Gu}, {and} \bibinfo{person}{Jeremy Clark}.} \bibinfo{year}{2021}\natexlab{}.
\newblock \showarticletitle{SoK: Oracles from the Ground Truth to Market Manipulation}.
\newblock \bibinfo{journal}{\emph{CoRR}}  \bibinfo{volume}{abs/2106.00667} (\bibinfo{year}{2021}).
\newblock
\urldef\tempurl%
\url{https://arxiv.org/abs/2106.00667}
\showURL{%
\tempurl}


\bibitem[Ezzat et~al\mbox{.}(2022)]%
        {ezzat2022timelock}
\bibfield{author}{\bibinfo{person}{Shahinaz~Kamal Ezzat}, \bibinfo{person}{Yasmine N.~M. Saleh}, {and} \bibinfo{person}{Ayman~A. Abdel-Hamid}.} \bibinfo{year}{2022}\natexlab{}.
\newblock \showarticletitle{Blockchain Oracles: State-of-the-Art and Research Directions}.
\newblock \bibinfo{journal}{\emph{IEEE Access}} (\bibinfo{year}{2022}).
\newblock
\urldef\tempurl%
\url{https://doi.org/10.1109/ACCESS.2022.3184726}
\showDOI{\tempurl}


\bibitem[Grishchenko et~al\mbox{.}(2018)]%
        {Grish2018}
\bibfield{author}{\bibinfo{person}{Ilya Grishchenko}, \bibinfo{person}{Matteo Maffei}, {and} \bibinfo{person}{Clara Schneidewind}.} \bibinfo{year}{2018}\natexlab{}.
\newblock \showarticletitle{A Semantic Framework for the Security Analysis of Ethereum Smart Contracts}. In \bibinfo{booktitle}{\emph{Principles of Security and Trust}}, \bibfield{editor}{\bibinfo{person}{Lujo Bauer} {and} \bibinfo{person}{Ralf K{\"u}sters}} (Eds.). \bibinfo{publisher}{Springer International Publishing}, \bibinfo{address}{Cham}, \bibinfo{pages}{243--269}.
\newblock
\showISBNx{978-3-319-89722-6}


\bibitem[Gudgeon et~al\mbox{.}(2020)]%
        {gudgeon2020defi}
\bibfield{author}{\bibinfo{person}{Lewis Gudgeon}, \bibinfo{person}{Sam Werner}, \bibinfo{person}{Daniel Perez}, {and} \bibinfo{person}{William~J Knottenbelt}.} \bibinfo{year}{2020}\natexlab{}.
\newblock \showarticletitle{Defi protocols for loanable funds: Interest rates, liquidity and market efficiency}. In \bibinfo{booktitle}{\emph{Proceedings of the 2nd ACM Conference on Advances in Financial Technologies}}. \bibinfo{pages}{92--112}.
\newblock


\bibitem[Kao et~al\mbox{.}(2020)]%
        {Kao2020AnAO}
\bibfield{author}{\bibinfo{person}{Hsien-Tang Kao}, \bibinfo{person}{Tarun Chitra}, \bibinfo{person}{Rei Chiang}, \bibinfo{person}{John Morrow}, {and} \bibinfo{person}{Gauntlet}.} \bibinfo{year}{2020}\natexlab{}.
\newblock \showarticletitle{An Analysis of the Market Risk to Participants in the Compound Protocol}.
\newblock


\bibitem[Lauterbach and Ben-Zion(1993)]%
        {Lauterbach1993StockMC}
\bibfield{author}{\bibinfo{person}{Beni Lauterbach} {and} \bibinfo{person}{Uri Ben-Zion}.} \bibinfo{year}{1993}\natexlab{}.
\newblock \showarticletitle{Stock Market Crashes and the Performance of Circuit Breakers: Empirical Evidence}.
\newblock \bibinfo{journal}{\emph{Journal of Finance}}  \bibinfo{volume}{48} (\bibinfo{year}{1993}), \bibinfo{pages}{1909--1925}.
\newblock


\bibitem[Li et~al\mbox{.}(2022)]%
        {li2022security}
\bibfield{author}{\bibinfo{person}{Wenkai Li}, \bibinfo{person}{Jiuyang Bu}, \bibinfo{person}{Xiaoqi Li}, {and} \bibinfo{person}{Xianyi Chen}.} \bibinfo{year}{2022}\natexlab{}.
\newblock \showarticletitle{Security analysis of DeFi: Vulnerabilities, attacks and advances}. In \bibinfo{booktitle}{\emph{2022 IEEE International Conference on Blockchain (Blockchain)}}. IEEE, \bibinfo{pages}{488--493}.
\newblock


\bibitem[Li et~al\mbox{.}(2021)]%
        {li2021impacts}
\bibfield{author}{\bibinfo{person}{Zeguang Li}, \bibinfo{person}{Keqiang Hou}, {and} \bibinfo{person}{Chao Zhang}.} \bibinfo{year}{2021}\natexlab{}.
\newblock \showarticletitle{The impacts of circuit breakers on China's stock market}.
\newblock \bibinfo{journal}{\emph{Pacific-Basin Finance Journal}} (\bibinfo{year}{2021}).
\newblock


\bibitem[Liu and Szalachowski(2020)]%
        {bowen2020oracle}
\bibfield{author}{\bibinfo{person}{Bowen Liu} {and} \bibinfo{person}{Pawel Szalachowski}.} \bibinfo{year}{2020}\natexlab{}.
\newblock \showarticletitle{A First Look into DeFi Oracles}.
\newblock \bibinfo{journal}{\emph{CoRR}}  \bibinfo{volume}{abs/2005.04377} (\bibinfo{year}{2020}).
\newblock
\showeprint[arXiv]{2005.04377}
\urldef\tempurl%
\url{https://arxiv.org/abs/2005.04377}
\showURL{%
\tempurl}


\bibitem[Mackinga et~al\mbox{.}(2022)]%
        {mackinga2022twap}
\bibfield{author}{\bibinfo{person}{Torgin Mackinga}, \bibinfo{person}{Tejaswi Nadahalli}, {and} \bibinfo{person}{Roger Wattenhofer}.} \bibinfo{year}{2022}\natexlab{}.
\newblock \showarticletitle{TWAP Oracle Attacks: Easier Done than Said?}. In \bibinfo{booktitle}{\emph{2022 IEEE ICBC}}. IEEE, \bibinfo{pages}{1--8}.
\newblock


\bibitem[Pauwels et~al\mbox{.}(2022)]%
        {pauwels2022zkkyc}
\bibfield{author}{\bibinfo{person}{Pieter Pauwels}, \bibinfo{person}{Joni Pirovich}, \bibinfo{person}{Peter Braunz}, {and} \bibinfo{person}{Jack Deeb}.} \bibinfo{year}{2022}\natexlab{}.
\newblock \showarticletitle{zkKYC in DeFi: An approach for implementing the zkKYC solution concept in Decentralized Finance}.
\newblock \bibinfo{journal}{\emph{Cryptology ePrint Archive}} (\bibinfo{year}{2022}).
\newblock


\bibitem[Perez et~al\mbox{.}(2021)]%
        {perez2021}
\bibfield{author}{\bibinfo{person}{Daniel Perez}, \bibinfo{person}{Sam~M Werner}, \bibinfo{person}{Jiahua Xu}, {and} \bibinfo{person}{Benjamin Livshits}.} \bibinfo{year}{2021}\natexlab{}.
\newblock \showarticletitle{Liquidations: DeFi on a Knife-edge}. In \bibinfo{booktitle}{\emph{Financial Cryptography and Data Security: 25th International Conference, FC 2021}}. Springer, \bibinfo{pages}{457--476}.
\newblock


\bibitem[Qin et~al\mbox{.}(2021a)]%
        {qin2021cefi}
\bibfield{author}{\bibinfo{person}{Kaihua Qin}, \bibinfo{person}{Liyi Zhou}, \bibinfo{person}{Yaroslav Afonin}, \bibinfo{person}{Ludovico Lazzaretti}, {and} \bibinfo{person}{Arthur Gervais}.} \bibinfo{year}{2021}\natexlab{a}.
\newblock \showarticletitle{CeFi vs. DeFi--Comparing Centralized to Decentralized Finance}.
\newblock \bibinfo{journal}{\emph{arXiv preprint arXiv:2106.08157}} (\bibinfo{year}{2021}).
\newblock


\bibitem[Qin et~al\mbox{.}(2021b)]%
        {flash-loan}
\bibfield{author}{\bibinfo{person}{Kaihua Qin}, \bibinfo{person}{Liyi Zhou}, \bibinfo{person}{Benjamin Livshits}, {and} \bibinfo{person}{Arthur Gervais}.} \bibinfo{year}{2021}\natexlab{b}.
\newblock \showarticletitle{Attacking the defi ecosystem with flash loans for fun and profit}. In \bibinfo{booktitle}{\emph{Financial Cryptography and Data Security}}. Springer, \bibinfo{pages}{3--32}.
\newblock


\bibitem[Sanka and Cheung(2021)]%
        {SANKA2021103232}
\bibfield{author}{\bibinfo{person}{Abdurrashid~Ibrahim Sanka} {and} \bibinfo{person}{Ray~C.C. Cheung}.} \bibinfo{year}{2021}\natexlab{}.
\newblock \showarticletitle{A systematic review of blockchain scalability: Issues, solutions, analysis and future research}.
\newblock \bibinfo{journal}{\emph{Journal of Network and Computer Applications}}  \bibinfo{volume}{195} (\bibinfo{year}{2021}), \bibinfo{pages}{103232}.
\newblock
\showISSN{1084-8045}
\urldef\tempurl%
\url{https://doi.org/10.1016/j.jnca.2021.103232}
\showDOI{\tempurl}


\bibitem[Santoni and Liu(1993)]%
        {Santoni1993CircuitBA}
\bibfield{author}{\bibinfo{person}{G.~J. Santoni} {and} \bibinfo{person}{Tung Liu}.} \bibinfo{year}{1993}\natexlab{}.
\newblock \showarticletitle{Circuit breakers and stock market volatility}.
\newblock \bibinfo{journal}{\emph{Journal of Futures Markets}}  \bibinfo{volume}{13} (\bibinfo{year}{1993}), \bibinfo{pages}{261--277}.
\newblock


\bibitem[Schär(2021)]%
        {defi-market}
\bibfield{author}{\bibinfo{person}{Fabian Schär}.} \bibinfo{year}{2021}\natexlab{}.
\newblock \showarticletitle{{Decentralized Finance: On Blockchain- and Smart Contract-Based Financial Markets}}.
\newblock \bibinfo{journal}{\emph{Review}} \bibinfo{volume}{103}, \bibinfo{number}{2} (\bibinfo{date}{4} \bibinfo{year}{2021}), \bibinfo{pages}{153--174}.
\newblock
\urldef\tempurl%
\url{https://doi.org/10.20955/r.103.153-74}
\showDOI{\tempurl}


\bibitem[Wang et~al\mbox{.}(2021a)]%
        {wang2021blockeye}
\bibfield{author}{\bibinfo{person}{Bin Wang}, \bibinfo{person}{Han Liu}, \bibinfo{person}{Chao Liu}, \bibinfo{person}{Zhiqiang Yang}, \bibinfo{person}{Qian Ren}, \bibinfo{person}{Huixuan Zheng}, {and} \bibinfo{person}{Hong Lei}.} \bibinfo{year}{2021}\natexlab{a}.
\newblock \showarticletitle{Blockeye: Hunting for defi attacks on blockchain}. In \bibinfo{booktitle}{\emph{2021 IEEE/ACM 43rd International Conference on Software Engineering: Companion Proceedings (ICSE-Companion)}}. IEEE.
\newblock


\bibitem[Wang et~al\mbox{.}(2020)]%
        {first-step}
\bibfield{author}{\bibinfo{person}{Dabao Wang}, \bibinfo{person}{Siwei Wu}, \bibinfo{person}{Ziling Lin}, \bibinfo{person}{Lei Wu}, \bibinfo{person}{Xingliang Yuan}, \bibinfo{person}{Yajin Zhou}, \bibinfo{person}{Haoyu Wang}, {and} \bibinfo{person}{Kui Ren}.} \bibinfo{year}{2020}\natexlab{}.
\newblock \showarticletitle{Towards understanding flash loan and its applications in defi ecosystem}.
\newblock \bibinfo{journal}{\emph{CoRR}}  \bibinfo{volume}{abs/2010.12252} (\bibinfo{year}{2020}).
\newblock
\urldef\tempurl%
\url{https://arxiv.org/abs/2010.12252}
\showURL{%
\tempurl}


\bibitem[Wang et~al\mbox{.}(2021b)]%
        {wang2021towards}
\bibfield{author}{\bibinfo{person}{Dabao Wang}, \bibinfo{person}{Siwei Wu}, \bibinfo{person}{Ziling Lin}, \bibinfo{person}{Lei Wu}, \bibinfo{person}{Xingliang Yuan}, \bibinfo{person}{Yajin Zhou}, \bibinfo{person}{Haoyu Wang}, {and} \bibinfo{person}{Kui Ren}.} \bibinfo{year}{2021}\natexlab{b}.
\newblock \showarticletitle{Towards a first step to understand flash loan and its applications in defi ecosystem}. In \bibinfo{booktitle}{\emph{Proceedings of the Ninth International Workshop on Security in Blockchain and Cloud Computing}}. \bibinfo{pages}{23--28}.
\newblock


\bibitem[Werner et~al\mbox{.}(2022)]%
        {werner2022sok}
\bibfield{author}{\bibinfo{person}{Sam Werner}, \bibinfo{person}{Daniel Perez}, \bibinfo{person}{Lewis Gudgeon}, \bibinfo{person}{Ariah Klages-Mundt}, \bibinfo{person}{Dominik Harz}, {and} \bibinfo{person}{William Knottenbelt}.} \bibinfo{year}{2022}\natexlab{}.
\newblock \showarticletitle{Sok: Decentralized finance (defi)}. In \bibinfo{booktitle}{\emph{Proceedings of the 4th ACM Conference on Advances in Financial Technologies}}. \bibinfo{pages}{30--46}.
\newblock


\bibitem[Wood et~al\mbox{.}(2014)]%
        {wood2014ethereum}
\bibfield{author}{\bibinfo{person}{Gavin Wood} {et~al\mbox{.}}} \bibinfo{year}{2014}\natexlab{}.
\newblock \showarticletitle{Ethereum: A secure decentralised generalised transaction ledger}.
\newblock \bibinfo{journal}{\emph{Ethereum project yellow paper}} (\bibinfo{year}{2014}).
\newblock


\bibitem[Xu and Feng(2022)]%
        {xu2022reap}
\bibfield{author}{\bibinfo{person}{Jiahua Xu} {and} \bibinfo{person}{Yebo Feng}.} \bibinfo{year}{2022}\natexlab{}.
\newblock \showarticletitle{Reap the Harvest on Blockchain: A Survey of Yield Farming Protocols}.
\newblock \bibinfo{journal}{\emph{IEEE Transactions on Network and Service Management}} (\bibinfo{year}{2022}).
\newblock


\bibitem[Xu et~al\mbox{.}(2023)]%
        {sok-amm}
\bibfield{author}{\bibinfo{person}{Jiahua Xu}, \bibinfo{person}{Krzysztof Paruch}, \bibinfo{person}{Simon Cousaert}, {and} \bibinfo{person}{Yebo Feng}.} \bibinfo{year}{2023}\natexlab{}.
\newblock \showarticletitle{Sok: Decentralized exchanges (dex) with automated market maker (amm) protocols}.
\newblock \bibinfo{journal}{\emph{Comput. Surveys}} \bibinfo{volume}{55}, \bibinfo{number}{11} (\bibinfo{year}{2023}), \bibinfo{pages}{1--50}.
\newblock


\bibitem[Zhou et~al\mbox{.}(2023)]%
        {zhou2023sok}
\bibfield{author}{\bibinfo{person}{Liyi Zhou}, \bibinfo{person}{Xihan Xiong}, \bibinfo{person}{Jens Ernstberger}, \bibinfo{person}{Stefanos Chaliasos}, \bibinfo{person}{Zhipeng Wang}, \bibinfo{person}{Ye Wang}, \bibinfo{person}{Kaihua Qin}, \bibinfo{person}{Roger Wattenhofer}, \bibinfo{person}{Dawn Song}, {and} \bibinfo{person}{Arthur Gervais}.} \bibinfo{year}{2023}\natexlab{}.
\newblock \showarticletitle{Sok: Decentralized finance (defi) incidents}. In \bibinfo{booktitle}{\emph{IEEE Symposium on Security and Privacy (SP)}}. IEEE Computer Society, \bibinfo{pages}{2444--2461}.
\newblock


\end{thebibliography}


\end{document}